\documentclass[a4paper, 12pt]{article}

\usepackage[sort&compress]{natbib}
\bibpunct{(}{)}{;}{a}{}{,} 
\RequirePackage[colorlinks,citecolor=blue,urlcolor=blue]{hyperref}

\usepackage{amsthm, amsmath, amssymb, mathrsfs, multirow, url, subfigure}
\usepackage{graphicx} 
\usepackage{ifthen} 
\usepackage{amsfonts}
\usepackage[usenames]{color}
\usepackage{fullpage}
\usepackage{tikz}
\usepackage{float}
\usepackage{bbm}
\usepackage{booktabs}
\usepackage{soul}
\allowdisplaybreaks

\theoremstyle{plain} 
\newtheorem{thm}{Theorem}

\newtheorem{prop}{Proposition}
\newtheorem{lem}{Lemma}

\theoremstyle{definition}

\theoremstyle{remark}

\newcommand{\prob}{\mathsf{P}}

\newcommand{\unif}{{\sf Unif}}

\newcommand{\gam}{{\sf Gamma}}

\newcommand{\chisq}{{\sf ChiSq}}

\newcommand{\betadist}{{\sf Beta}}
\newcommand{\IH}{{\sf IH}}

\newcommand{\ZZ}{\mathbb{Z}}

\newcommand{\TT}{\mathbb{T}}

\newcommand{\iid}{\overset{\text{\tiny iid}}{\,\sim\,}}

\newcommand{\prior}{\mathsf{Q}}

\newcommand{\lPi}{\underline{\Pi}}
\newcommand{\uPi}{\overline{\Pi}}

\begin{document}

\title{\bf 
Valid and efficient possibilistic fusion}
\author{Leonardo Cella \hspace{.2cm}\\
Department of Statistical Sciences, Wake Forest University\\}
\date{}
\maketitle
 
\begin{abstract} 

Besides the classical motivation of fusing evidence from multiple sources, modern inferential procedures based on randomization, resampling, and data splitting often introduce analyst-generated multiplicity, where aggregating outputs across random realizations can improve robustness and stability. This emphasizes the importance of developing principled strategies for fusing measures of evidence across different inferential settings, while preserving the key properties of the adopted inferential framework. The present paper addresses this problem in the context of inferential models (IMs), a possibilistic approach for provably valid statistical inference. Although the fusion of possibility measures has been extensively studied in the possibility-theory literature, existing methods do not, in general, preserve IM validity. We propose a general validity-preserving framework for possibilistic fusion, motivated by the ranking--validification construction underlying IMs. We study the implementation of this framework under independence, arbitrary dependence, and exchangeability of the available IMs, thereby providing a unified approach for IM fusion across a broad range of practically relevant scenarios. The proposed framework also reveals important efficiency considerations, showing that intuitive and commonly used fusion operators may become inefficient in the IM context, so that alternative choices can sometimes be advantageous, including ones that might not appear natural from a purely intuitive standpoint.


\end{abstract}

\noindent%
{\it Keywords:}  Inferential models , possibility measures , information fusion , p-values , validity.

\section{Introduction}
\label{s:intro}

The problem of combining multiple sources of evidence to better distinguish signal from noise has long been a central contribution of statistics to scientific inquiry. A classical example arises when several independent studies are conducted to address a common question. Each study produces a measure of evidence about the quantity of interest, and the goal is to combine these into a single, more informative summary. In such settings, the multiplicity of evidence arises naturally.

A related—but importantly different—situation arises in modern inference, where randomization, resampling and data splitting have become routine tools for improving robustness and computational tractability. However, these procedures can also introduce inferential variability: different random realizations may yield noticeably different conclusions. This suggests that aggregating multiple such inferential outputs can serve as a natural stabilization principle, which underlies methods such as bootstrap aggregation (bagging) \citep{Breiman1996} and split-and-aggregate procedures \citep{Meinshausen2009}. Note that, in contrast to the classical setting above, the multiplicity of outputs here does not arise naturally from the scientific problem itself, but is instead deliberately introduced by the analyst, while still leading to the same fundamental question of how to ideally combine measures of evidence.

Along these lines, the present paper is primarily focused on exploring new developments in the combination or fusion of {\em inferential models} (IMs) \citep{martinbook}. IMs, ``one of the original statistical innovations of the 2010s'' \citep{cui.hannig.im}, provide a ``best of both worlds'' (frequentist and Bayesian) framework for statistical inference, in that their uncertainty quantification takes the form of both procedures with guaranteed error-rate control and calibrated degrees of belief about assertions concerning the unknown quantity of interest. One factor that makes the reliability of these two forms of uncertainty quantification possible is that IMs are formulated in the language of {\em imprecise probability theory} or, more specifically, {\em possibility theory} \citep[e.g.,][]{dubois.prade.book,dubois2006possibility}. This possibilistic nature of IMs entails the existence of a {\em possibility contour}, which serves as the foundation for all IM-based uncertainty quantification. These possibility contours constitute the inferential objects to be fused throughout the paper.

There exists a vast literature addressing the fusion of possibility contours; see, e.g., \citet{Dubois1999} and the references therein, as well as \citet{DuboisPrade88,dubois:hal-01484952,Dubois2001}. IM contours are, however, a special type of possibility contour, in that they satisfy a key validity property. This is another factor that ensures the reliability of all IM-based uncertainty quantification. Existing fusion strategies do not account for this requirement and therefore do not, in general, preserve validity. This motivates the need for alternative fusion methods. 
The present paper proposes a general validity-preserving approach to IM fusion that accommodates the different multiplicity scenarios described above, enabling IM users to combine IMs in a principled and flexible way across a wide range of settings.

The proposed IM fusion approach is guided by a central feature of the IM framework, namely its two-step construction through {\em ranking} and {\em validification}; see \citet{ryanpp1,imreview}. In the usual IM setting, candidate values of the unknown quantity of interest are first ranked according to their compatibility with the observed data, and this ranking is then validified to produce a valid possibility contour. This same logic is well suited to the fusion problem considered here. Indeed, once multiple IM contours are available, their aggregation through a chosen operator can be viewed as producing a new combined ranking, which must then be appropriately validified in order to preserve the reliability guarantees that distinguish IM-based uncertainty quantification. The main contribution of this paper is to develop validification strategies for such fused rankings under different scenarios, depending on what additional assumptions are made about the available IM contours.

The remainder of the paper is organized as follows. After providing background on IMs in Section~\ref{s:back}, we review in Section~\ref{ss:naive} established fusion strategies from the possibility literature and explain why they are naïve for our purposes, i.e., why they fail to preserve IM validity. We then introduce a general {\em FVN (fuse–validify–normalize)} strategy for validity-preserving IM fusion in Section~\ref{ss:FVN} and study its implementation under three distinct dependence regimes for the available contours: independence (Section~\ref{S:indep}), arbitrary dependence (Section~\ref{s:dependent}), and exchangeability (Section~\ref{s:exch}). The independent setting is the one in which validification most closely resembles that in the general IM construction described in Section~\ref{s:back}, since the sampling distribution of the combined ranking can be derived. 
This approach breaks down in the arbitrarily dependent case, which calls for alternative strategies. We propose one strategy based on suitably upper bounding the aforementioned sampling distribution, and another inspired by recent proposals for combining $p$-values; see, e.g., \cite{vovk2020combining,vovk2022admissible}. Efficiency considerations and recommendations regarding the resulting validifications are also discussed. While these alternative validification strategies also underpin the exchangeable case, the additional structural assumption can be leveraged to enable more efficient IM fusion in some situations. Illustrations are presented in Section~\ref{s:Examples}, and a concise summary is provided along with some future directions in Section~\ref{s:Conclusion}.

Related work in the IM literature includes the recent validity-preserving strategy based on order-statistic fusion operators for dependent IMs proposed by \citet{LinIMAggregation}, as well as the work of \citet{CellaFusion}, which can be viewed as the seed for the developments presented here, although only independent IMs were considered in that work. The proposed FVN strategy substantially generalizes these approaches, both in that it is not tied to specific fusion operators and in the scope of dependence structures it accommodates, encompassing independence, arbitrary dependence, and exchangeability.  \citet{SyringMartin19} investigated the validity and efficiency of different fusion rules in the context of independent IMs. Another contribution of the present work is in this same spirit, namely, to explore—across the aforementioned range of dependence structures—how validity and/or efficiency considerations may rule out commonly used fusion operators in the possibility-fusion literature or, alternatively, motivate moving beyond them, including choices that might not appear natural from a purely intuitive standpoint.



\section{Background on IMs}
\label{s:back}

Frequentist approaches quantify uncertainty through calibrated procedures that control error rates, such as confidence sets and hypothesis tests. Bayesian approaches, by contrast, deliver probabilistic degrees of belief via posterior distributions. Both perspectives offer valuable but incomplete tools for scientific inference, and each has elements the other lacks. This has motivated several unifying frameworks—including fiducial \citep{fisherfiducial}, generalized fiducial \citep{MainHaning}, default-prior Bayes \citep{berger.objective.book}, and confidence distributions \citep{mainconfdist}—which seek to produce prior-free, data-driven belief assignments that are frequentistically calibrated.

More specifically, let $Z = (Z_1, \ldots, Z_n)$, $Z \in \ZZ$, denote the observable data which is modeled by $\{\mathbb{P}_\theta: \theta \in \TT\}$, where $\mathbb{P}_\theta$ is a probability distribution supported on $\ZZ$ that is associated with some unknown quantity $\theta \in \TT$. This setup is intended to be general, encompassing scenarios where $\mathbb{P}_\theta$ conforms to a parametric model, as well as instances where the distribution of $Z$ remains unspecified, with $\theta$ serving to characterize a feature of interest  of this distribution. We assume there's a ``true'' value of the unknown quantity, denoted by $\Theta$, so that $Z \sim \mathbb{P}_\Theta$.  The present goal is to quantify uncertainty about $\Theta$, given $Z=z$.  Those unifying approaches mentioned in the previous paragraph proceed by identifying a probability distribution $\prior_z$ supported on $\TT$ and, for any $H \subseteq \TT$, interpret $\prior_z(H)$ as the $z$-dependent degree of belief in the hypothesis ``$\Theta \in H$.'' Of course, with such an interpretation, it would be undesirable if $\prior_Z(H)$ tended to be large for some $H \not\ni \Theta$ as a function of $Z \sim \mathbb{P}_\Theta$.  Then the following calibration condition makes sense: 
\begin{equation}\label{eq:NeceVal}
\sup_{\theta \not\in H} \mathbb{P}_\theta \{ \prior_{Z}(H) \geq 1-\alpha \} \leq \alpha, \quad  \text{for all $\alpha \in [0,1]$ and all $H \subseteq \TT$.}
\end{equation}
In words, \eqref{eq:NeceVal} states that assigning large belief to false hypotheses about $\Theta$ is a rare event, thereby justifying the inference ``$\Theta \in H$'' when $\prior_z(H)$ is sufficiently large.  Unfortunately for probabilism, the {\em false confidence theorem} \citep{Ryansatellite} implies that there are no data-driven precise probabilities $\prior_z$ that satisfy \eqref{eq:NeceVal}.  Therefore, in order for uncertainty quantification to be reliable in this sense, it's necessary that it take the form of an {\em imprecise probability}. 

Imprecise probabilistic approaches to statistical inference, e.g., Dempster--Shafer theory \citep{dempster.copss, dempster1967, dempster1968a, DEMPSTER2008365, shafer1976mathematical}, belief functions \citep{denoeux.li.2018, denoeux2014} and Inferential Models (IMs) \citep{martinbook}, produce a data-dependent imprecise probability, i.e., a mapping $z \mapsto (\lPi_z, \uPi_z)$, consisting of a lower probability $\lPi_z$ and an upper probability $\uPi_z$, interpreted as degrees of belief and plausibility, respectively.  Both are supported on $\TT$, with the basic property that $\lPi_z(H) \leq \uPi_z(H)$ for all $H$.  The lower and upper probabilities themselves are monotone set functions from $2^\TT$ to $[0,1]$, but they are non-additive and hence can't be called probability measures.  The two are linked via the relation, $\lPi_z(H) = 1 - \uPi_z(H^c)$. But imprecision alone is not sufficient to guarantee that $\lPi_z$ satisfies \eqref{eq:NeceVal}. A special construction is needed and, to the best of our knowledge, IMs are the only framework that achieves this.

IMs have a possibilistic nature. More specifically, they are formulated in the language of possibility theory, the simplest/most-structured form of imprecise probability. This implies the existence of a {\em possibility contour}, namely, a function $\pi_{z}(\theta)$ on $(\mathbb{Z} \times \TT) \rightarrow [0,1]$ satisfying
\begin{equation}\label{eq:consonance}
 \sup_{\theta \in \TT} \pi_{z}(\theta)=1 \quad \text{for all } z.   
\end{equation}
This possibility contour serves as the foundation for all lower/upper probabilities computations through
$\uPi_{z}(H) = \sup_{\theta \in H} \pi_{z}(\theta)$.
Critical to the IM's reliability is the calibration property \eqref{eq:NeceVal} and, of course, certain constraints must be imposed on $\pi_z$ such that this is achieved. It turns out that the relevant constraint is that the random variable $\pi_Z(\Theta)$, as a function of $Z \sim \prob_\Theta$, be like a p-value, i.e., stochastically no smaller than $\unif(0,1)$:
\begin{equation}\label{eq:IMvalidity}
\mathbb{P}_\Theta\{ \pi_{Z}(\Theta) \leq \alpha\} \leq \alpha, \quad  \text{for all $\alpha \in [0,1]$.}
\end{equation}
We refer to this property as {\em validity}. An IM is termed {\em exact} valid when $\pi_{Z}(\Theta) \sim \unif(0,1)$. 

IM validity has several important consequences. First, besides ensuring that $\lPi_z$ satisfies the calibration property in \eqref{eq:NeceVal}, it also guarantees that $\uPi_z$ satisfies 
\[\sup_{\theta \in H} \mathbb{P}_\theta\{ \uPi_Z(H) \leq \alpha\} \leq \alpha \quad  \text{for all $\alpha \in [0,1]$ and all $H \subseteq \TT$}.\] 
In fact, the IMs lower and upper probabilities satisfy even stronger uniform-in-$H$ calibration properties, but that's beyond our present scope; see \citet{CellaMartinSevere}. Second, validity implies that the so-called $\alpha$-cuts of the contour,
\[ C_\alpha(z) = \{ \theta: \pi_z(\theta) > \alpha \}, \quad \alpha \in [0,1], \]
are confidence sets in the sense that $\sup_{\theta \in \TT} \mathbb{P}_\theta\{ C_\alpha(Z) \not\ni \theta\} \leq \alpha$ for each $\alpha$.  Third, in formal decision-making—i.e., finding actions that minimize a suitable upper expected loss—validity implies reliability guarantees for the IM-derived actions \citep{imdec.isipta}.

We point out that, although $\pi_Z(\Theta)$ satisfies the $p$-value-like property in \eqref{eq:IMvalidity}, it is not merely a $p$-value. More precisely, $\pi_z$ is not just a $p$-function, but a special type of $p$-function, since $p$-functions are not required to attain a maximum value of $1$ on $\TT$. This distinction is important because it is precisely this constraint that makes $\uPi_z$ a coherent upper probability in the sense of \citet{walley1991}; see also \citep{miranda.cooman.chapter,lower.previsions.book}.
In particular, $\pi_z$ determines a non-empty collection of data-dependent probability distributions dominated by $\uPi_z$, each of which can be interpreted as a confidence distribution. The possibilistic framework built on the pair $(\lPi_z,\uPi_z)$ is therefore fully compatible with conditional Bayesian reasoning, while at the same time extending it beyond the additive setting. Importantly, these Bayesian-like fixed-data interpretations coexist with the frequentist calibration and error-rate control guarantees that characterize IMs. For strong results on certain elements of this dominated class, including exact probability matching, Bernstein--von Mises behavior, and their implications for efficient computation within the IM framework, see \citet{reimagined,immc}.

How are IMs constructed? More specifically, how can we construct a data-dependent possibility contour that satisfies \eqref{eq:IMvalidity}?  Following \citet{ryanpp1,imreview}, we'll focus on a two-step construction: 
\begin{enumerate}
    \item {\em Ranking-step:} Specify a ranking function $R_{z}(\theta)$ that ranks candidate values $\theta \in \TT$ in terms of how compatible it is with the observed data $z$, with higher values of the ranking function meaning greater compatibility.

    \item {\em Validification-step:} Perform
    \begin{equation}\label{eq:IMvalidification}
\pi_{z}(\theta) = \mathbb{P}_\theta\{R_Z(\theta) \leq R_z(\theta)\}, \quad \theta \in \TT,
\end{equation}
so that the chosen ranking function is transformed into a possibility contour that is valid in the sense of \eqref{eq:IMvalidity}.
\end{enumerate}

As an example, the relative likelihood constitutes a principled choice for the ranking function when $\mathbb{P}_\theta$ is a parametric model. The relative likelihood is itself a data-dependent possibility contour, and inferences based on it have been extensively studied \citep[e.g.,][]{shafer1982, wasserman1990b, denoeux2006, denoeux2014}. However, there is no guarantee that it satisfies the validity property in \eqref{eq:IMvalidity}, so the validification step is performed within the IM framework to ensure reliability of the resulting inferences. For examples in semiparametric, nonparametric, and prediction settings \citep[see][]{imreview,CELLA2024IJAR,cella2025}.

\section{Fusing IMs}

\subsection{Notation and objective}

For ease of notation, throughout the remainder of the paper we suppress the explicit dependence of IM contours on the observed data introduced in Section~\ref{s:back}, and let $\pi_1, \ldots, \pi_K$ represent the data-dependent possibility contours of $K$ IMs, with $K \geq 2$, each formulated to provide reliable inference on the same unknown quantity $\Theta \in \TT$. An illustration for $K=3$ and one-dimensional $\Theta$ appears in Figure~\ref{fig:IM_contours}. The random-variable counterpart of $\pi_k$ will be denoted by $\Pi_k$, and for our purposes it suffices to focus on the case in which the $K$ IMs are exactly valid, i.e., $\Pi_k(\Theta) \sim \unif(0,1)$ for $k = 1, \ldots, K$.


The goal is to fuse these $K$ IM contours so that the result is itself an IM contour. That is, the fused output must remain a possibility contour, attaining a maximum value of $1$ on $\TT$ as in \eqref{eq:consonance}, and must also be valid, satisfying an IM-validity-type property analogous to \eqref{eq:IMvalidity}. The motivation for fusing the $K$ contours depends on the context. For example, in a meta-analysis setting where each IM arises from a different study, fusion is naturally well motivated and aims to produce more informative inference about $\Theta$. A different situation arises when the $K$ IMs are obtained from a construction based on data splitting, in which different splits can yield substantially different contours. In this case, fusion acts as a stabilization device. Illustrations of both settings are provided in Section~\ref{s:Examples}.

\begin{figure}[t]
\begin{center}
\scalebox{0.5}{\includegraphics{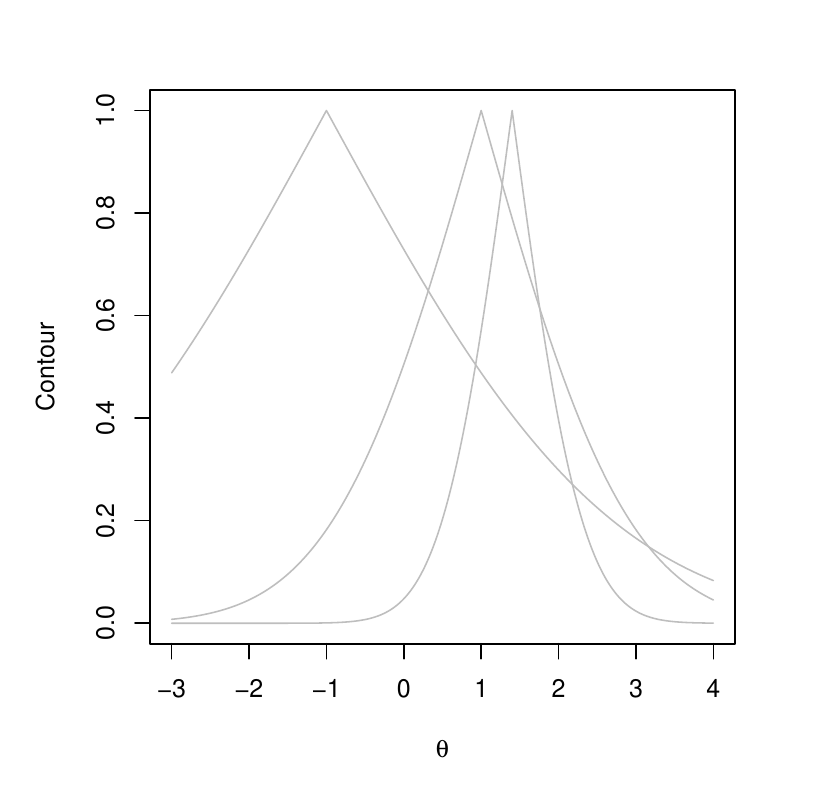}}
\end{center}
\caption{IM contours to be fused.}
\label{fig:IM_contours}
\end{figure}

\subsection{A naïve fusion strategy}
\label{ss:naive}
There is a rich body of literature on the fusion of possibility contours, with a comprehensive overview provided in \citet{Dubois1999}; see also \citet{DuboisPrade88, dubois:hal-01484952, Dubois2001}. In this broader context, the goal is to fuse multiple possibility contours into a single contour in order to identify the most plausible values of an unknown quantity. Fusion methods are typically classified as either \emph{logical} or \emph{statistical}. Logical methods include \emph{conjunctive} and \emph{disjunctive} modes: conjunctive fusion is appropriate when all contours are considered reliable, whereas disjunctive fusion is used when at least one (unknown) contour may be unreliable. Statistical fusion methods, on the other hand, are used when the contours arise from stochastic procedures.

Given that IMs are represented through possibility contours, a natural starting point is to apply these fusion methods developed in the possibility literature. Since each IM contour is assumed to be valid—and therefore reliable—conjunctive fusion is a natural choice. At the same time, because IMs arise from stochastic procedures, statistical approaches are also relevant. We therefore consider both types. 

Let $\gamma_f(\theta)$ denote the result of applying a fusion operator $f$ to the $K$ IM contours at a candidate value $\theta \in \TT$, i.e.,
\[\gamma_f(\theta) = f\big(\pi_1(\theta), \ldots, \pi_K(\theta)\big),\]
and let $\Gamma_f(\theta)$ denote its random-variable counterpart. In the possibilistic framework, interpretability of the fusion operator is crucial for practical adoption. Following \citet{Dubois2001,dubois:hal-01484952}, the primary conjunctive fusion operators are the {\em minimum}, {\em product}, and {\em linear-product}, while the main statistical fusion operators are the {\em average} and {\em geometric average}.

A purely logical view of the fusion process arises when the minimum operator is used, leading to 
$\gamma_{\mathrm{min}}(\theta) = \min\{\pi_1(\theta), \ldots, \pi_K(\theta)\}$,
where the IM assigning the smallest possibility degree to $\theta$ is regarded as the ``best informed'' with respect to this value. The product-based fusion $\gamma_{\mathrm{prod}}(\theta) = \prod_{k=1}^K\pi_k(\theta)$
yields possibility degrees strictly smaller than those produced by the minimum rule, thereby reinforcing the lack of full plausibility. The fusion based on the linear-product operation is given by 
\[\gamma_{\mathrm{l.prod}}(\theta) = \max\left\{0,\sum_{k=1}^K\pi_k(\theta) - K + 1\right\}.\] 
This is considered a drastic fusion rule that discards values of $\theta$ deemed insufficiently plausible by all IMs. The average $\gamma_{\mathrm{avg}}(\theta) = \frac{1}{K} \sum_{k=1}^K \pi_k(\theta)$
is appropriate when a central summary of the $K$ contours is desired, while the geometric average 
\[\gamma_{\mathrm{g.avg}}(\theta) = \left(\prod_{k=1}^K \pi_k(\theta)\right)^{1/K}\] introduces a conjunctive flavor, since any value assigned zero possibility by one IM is also ruled out by the fused contour.

An application of these five fusion operators is illustrated in Figure~\ref{fig:Naive_fusion}(a) for the $K=3$ contours shown in Figure~\ref{fig:IM_contours}. Notably, the resulting fused contours do not qualify as possibility contours, since they fail to attain a maximum value of one. This issue has been recognized in the possibility fusion literature, where a normalization step is typically recommended: $\gamma_f(\theta)$ is divided by $\sup_{\vartheta \in \TT} \gamma_f(\vartheta)$.
Figure~\ref{fig:Naive_fusion}(b) displays the five normalized fused contours. Note how the characteristic features of each fusion operator appear here, e.g., the drastic nature of the linear-product and the conjunctive flavor of the geometric average.

\begin{figure}[t]
\begin{center}
\subfigure[]{\scalebox{0.5}{\includegraphics{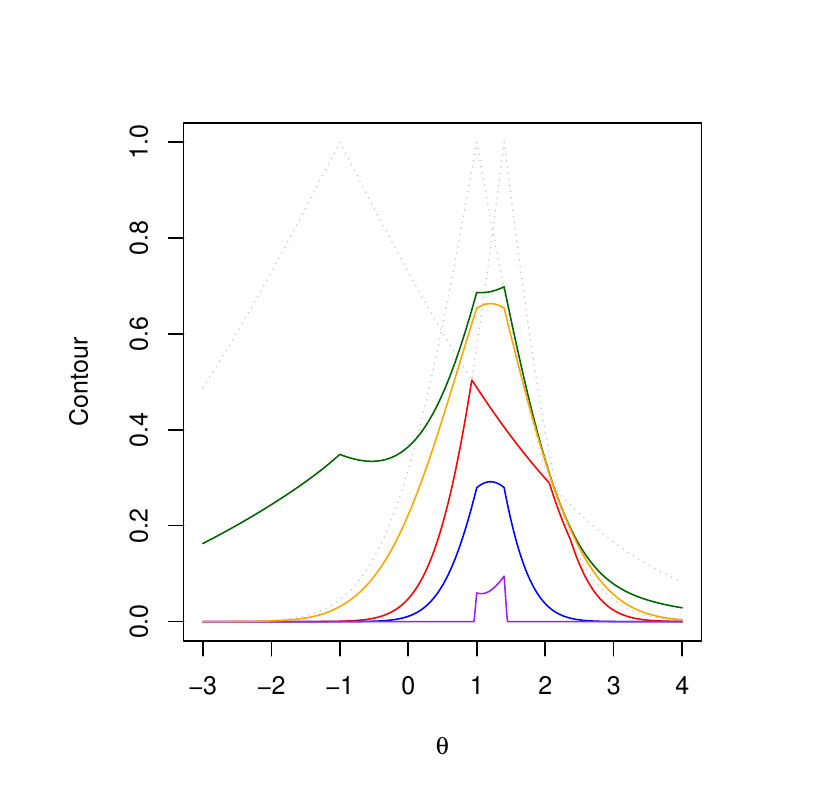}}}
\subfigure[]{\scalebox{0.5}{\includegraphics{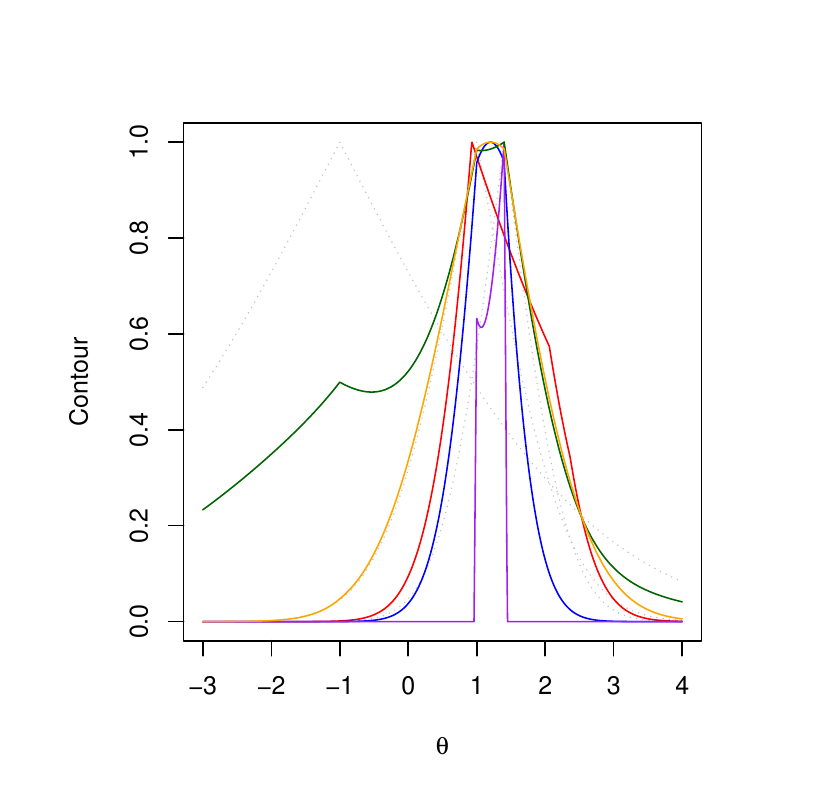}}}
\end{center}
\caption{Minimum (red), product (blue), linear-product (purple), average (green), and geometric average (orange) fusion operators applied to the three IM contours (grey) are shown in Panel~(a); their normalized versions appear in Panel~(b).}
\label{fig:Naive_fusion}
\end{figure}

At first glance, it may seem that the fusion process is complete and the normalized $\gamma_f$ can be used for inference on $\Theta$. 
However, although the $K$ IM contours we started with are valid—each satisfying \eqref{eq:IMvalidity}—this fusion approach does not guarantee that the resulting contour retains this property. In other words, the fusion methods commonly adopted in the possibility literature do not account for calibration guarantees. This is not surprising, since validity is not a general feature of possibility contours, but rather a defining requirement of IM-based contours. A validity-preserving fusion strategy is therefore required in the IMs setting.

\subsection{A general validity-preserving fusion strategy}
\label{ss:FVN}

Recall from Section~\ref{s:back} that the ranking function used to measure the compatibility between candidate values $\theta \in \TT$ and observed data $z$ in the ranking step of the IM construction may lack proper calibration, despite its intuitive appeal. This is precisely why the validification in \eqref{eq:IMvalidification} is necessary. The same reasoning applies in our fusion context: we begin with valid possibilistic rankings for $\Theta$ from $K$ different sources and seek to unify them into a single ranking $\gamma_f$ via a chosen fusion operator $f$. Fusion can therefore be viewed as a re-ranking process. However, no matter how intuitive this re-ranking is, there is no guarantee that $\gamma_f$ remains valid—hence the need for a validification step. 

This suggests that fusing IMs amounts to applying the ranking–validification two-step IM construction to a setting where the ``data'' are themselves IM contours, with the ranking step corresponding to the fusion of these contours via an operator chosen by the analyst. However, a third step must be added, namely a normalization step, since in practice it is highly unlikely that $\sup_{\theta \in \TT} \gamma_f(\theta)$ will equal one; see Figure~\ref{fig:Naive_fusion}(a). 
Normalization was not mentioned in the IM construction of Section~\ref{s:back} because it is usually the case that the chosen ranking function 
$R_Z(\theta)$ has a common maximum value that is attained for some $\theta \in \TT$ for every $z \in \mathbb{Z}$. For example, when $R_z(\theta)$ is the relative likelihood, it attains its maximum value of one at the maximum likelihood estimator for each $z \in \mathbb{Z}$. Normalization is nonetheless not unfamiliar in the IM framework; see, for example, Section~7 of \citet{CELLA20221} and \citet{LFIM}.

The proposed three-step strategy for validity-preserving IM fusion, which we will refer to as FVN—fuse–validify–normalize—is as follows:
\begin{enumerate}
    \item {\em Fusion-step:} Fuse the $K$ IM possibility contours through an appropriate fusion operator $f: [0,1]^K \to [0,1]$, typically chosen to be monotone in each coordinate, in order to produce a combined ranking 

    \[\gamma_f(\theta) = f(\pi_1(\theta), \ldots, \pi_K(\theta)), \quad \theta \in \TT.\]

    \item {\em Validification-step:} Validify $\gamma_f$ through
\[\dot\gamma_f(\theta) = \mathbb{P}_\theta\left\{\Gamma_f(\theta) \leq \gamma_f(\theta)\right\}, \quad \theta \in \TT,\]
so that the combined ranking obtained in the fusion-step is valid in the sense that 
\begin{equation}\label{eq:ValDotGamma}
\mathbb{P}_\Theta\{\dot\Gamma_f(\Theta) \leq \alpha\}\leq \alpha, \quad  \text{for all $\alpha \in [0,1]$}.     
\end{equation}

\item {\em Normalization-step:} Normalize $\dot\gamma_f$ through
    \begin{equation}\label{eq:normGeneral}
     \ddot\gamma_f(\theta) = \frac{\dot\gamma_f(\theta)}{\sup_{\vartheta \in \TT} \dot\gamma_f(\vartheta)}, \quad \theta \in \TT,   
    \end{equation}
    so that the valid combined ranking obtained in the validification-step becomes a valid possibility contour.
\end{enumerate}

\begin{thm}
The FVN strategy yields a valid fused possibility contour in the sense that
\[\mathbb{P}_\Theta\{\ddot\Gamma_f(\Theta) \leq \alpha\}\leq \alpha, \quad  \text{for all $\alpha \in [0,1]$}.\]    
\end{thm}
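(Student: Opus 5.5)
The argument rests on two facts: validification produces a valid combined ranking, and normalization can only increase it. For the first, we establish \eqref{eq:ValDotGamma}. This is the standard probability-integral-transform argument used for the validification step \eqref{eq:IMvalidification} in the basic IM construction. Fix $\Theta$ and write $G(t) = \mathbb{P}_\Theta\{\Gamma_f(\Theta) \leq t\}$ for the distribution function of the random combined ranking at the true value. Then $\dot\Gamma_f(\Theta) = G(\Gamma_f(\Theta))$. The classical fact that $G(X)$ is stochastically no smaller than $\unif(0,1)$ when $X \sim G$ gives $\mathbb{P}_\Theta\{G(\Gamma_f(\Theta)) \leq \alpha\} \leq \alpha$ for every $\alpha \in [0,1]$. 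The argument needs no continuity of $G$.

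The details run as follows. For a given $\alpha$, let $t^\star = \sup\{t : G(t) \leq \alpha\}$. By monotonicity of $G$, the event $\{G(\Gamma_f(\Theta)) \leq \alpha\}$ equals either $\{\Gamma_f(\Theta) < t^\star\}$ or $\{\Gamma_f(\Theta) \leq t^\star\}$. The second case occurs exactly when $G(t^\star) \leq \alpha$, and then the probability is $G(t^\star) \leq \alpha$. In the first case the probability is $\lim_{t \uparrow t^\star} G(t) \leq \alpha$. The degenerate cases where the set is empty or $t^\star = \infty$ are immediate. This is the only place a little care is needed, namely the atoms of $\Gamma_f(\Theta)$ and the left limit. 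I expect it to be routine.

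The normalization step is then trivial. Every value $\dot\gamma_f(\vartheta)$ is a probability, so the denominator in \eqref{eq:normGeneral} satisfies $0 < \sup_{\vartheta} \dot\gamma_f(\vartheta) \leq 1$. Positivity holds whenever the normalization is well defined; if the supremum is zero, adopt the convention that the contour is identically one, which is trivially valid. Hence $\ddot\gamma_f(\theta) \geq \dot\gamma_f(\theta)$ pointwise for every realization of the data. This gives the event inclusion $\{\ddot\Gamma_f(\Theta) \leq \alpha\} \subseteq \{\dot\Gamma_f(\Theta) \leq \alpha\}$.

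Combining the two facts, for all $\alpha \in [0,1]$,
\[
\mathbb{P}_\Theta\{\ddot\Gamma_f(\Theta) \leq \alpha\} \leq \mathbb{P}_\Theta\{\dot\Gamma_f(\Theta) \leq \alpha\} \leq \alpha,
\]
which is the claim. The main point worth flagging is that the sampling distribution used in the validification must be the joint law of $(\Pi_1(\theta),\ldots,\Pi_K(\theta))$ under $\mathbb{P}_\theta$. The argument above is agnostic to what that joint law is, so the theorem holds under any dependence structure. Computing that law, or bounding it, is the practical difficulty deferred to later sections.
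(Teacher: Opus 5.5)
Your proof is correct and follows essentially the same route as the paper: the probability integral transform gives validity of $\dot\Gamma_f(\Theta)=G(\Gamma_f(\Theta))$, and because the normalizing supremum is at most one, $\ddot\gamma_f\ge\dot\gamma_f$ pointwise, which yields the event inclusion and the claim. Your explicit handling of atoms of $G$ and of the degenerate zero-supremum case fills in details that the paper leaves implicit.
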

\begin{proof}
        Let $G$ be the cumulative distribution function of $\Gamma_f(\Theta)$. The probability integral transform ensures that $G(\Gamma_f(\Theta)) = \dot \Gamma_f(\Theta)$ is stochastically no smaller than $\unif(0,1)$, satisfying therefore \eqref{eq:ValDotGamma}. The normalization in \eqref{eq:normGeneral} guarantees that $\ddot\gamma_f$ is a possibility contour as $\sup_{\vartheta \in \TT}\ddot\gamma_f(\vartheta)=1$. Since $\ddot\gamma_f(\theta) \geq \dot\gamma_f(\theta)$ for all $\theta \in \TT$, 
        \[\mathbb{P}_\Theta\{\ddot\Gamma_f(\Theta)\le\alpha\}\;\le\;\mathbb{P}_\Theta\{\dot\Gamma_f(\Theta)\le\alpha\}\;\le\;\alpha,\]
 establishing the result.
\end{proof}

We emphasize once again that the proposed FVN strategy follows the same spirit as the ranking + validification IM construction reviewed in Section~\ref{s:back}. We refer to the ranking step as the fusion step to make clear that we are combining multiple IMs. 
In principle, the fusion and normalization steps are relatively straightforward; indeed, this is essentially what was done in the naïve strategy in Section~\ref{ss:naive}. The primary challenge of the FVN strategy lies in the validification step.

In the remainder of the paper, we explore strategies to address this challenge that depend on what additional assumptions—if any—are made about the $K$ IM contours, beyond the baseline assumption that each is exact valid. Independent IMs are considered in Section~\ref{S:indep}, while arbitrarily dependent IMs are treated in Section~\ref{s:dependent}. Despite the considerable flexibility in choosing the fusion operator $f$, we focus on the five operators introduced in Section~\ref{ss:naive}, namely, the minimum, product, linear-product, average, and geometric average, as these are the most commonly used when fusing possibility contours. We will see, however, that there are compelling reasons to look beyond these in some situations in order to improve efficiency; see the examples in Section~\ref{s:Examples}.

The above approach of obtaining a combined ranking in the fusion step by directly applying a fusion operator $f$ to the $K$ contours is natural and intuitive. However, somewhat surprisingly, we will see in Section~\ref{s:exch} that efficiency considerations call for a modification of this approach when dealing with exchangeable IMs. The fusion operator still plays an important role, but the final fused object does not necessarily correspond to its direct application to the $K$ IMs.


\section{Fusing independent IMs}
\label{S:indep}
Consider the case where the $K$ IMs are independent—for example, because they are built from independent data sources—so that $\Pi_1(\Theta), \ldots$, $\Pi_K(\Theta) \iid \unif(0,1)$. This key property allows us to perform the validification step in the exact same spirit as in Section~\ref{s:back}, as it enables the derivation of the distribution of $\Gamma_f(\Theta)$. 
To this end, we begin by recalling three classical results concerning iid \unif(0,1) random variables.

\begin{lem}\label{lemma:indep}
Let $U_1, \ldots$, $U_K \iid \unif(0,1)$. Then
\begin{itemize}
\item $U_{(k)}\sim\betadist(k,K-k+1)$ for $k=1,\ldots,K$;
\item $-\log \prod_{k=1}^K U_k\sim \gam(K,1)$;
\item $\sum_{k=1}^K U_k\sim \IH(K)$, the Irwin--Hall distribution with parameter $K$ \citep{irwin1927,hall1927}, whose pdf is given by
\begin{equation}\label{eq:IH}
h(x) = \frac{1}{(K - 1)!} \sum_{j=0}^{\lfloor x \rfloor} (-1)^j \binom{K}{j} (x - j)^{K - 1}, \quad \text{for } x \in [0, K].
\end{equation}
\end{itemize}
\end{lem}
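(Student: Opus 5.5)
The three claims are independent classical facts about iid $\unif(0,1)$ variables, so I will prove each one directly: order statistics by a counting argument, the product by a change of variables to exponentials, and the sum by convolution with an inclusion--exclusion expansion.

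\emph{Order statistics.} Fix $x \in (0,1)$. The event $\{U_{(k)} \le x\}$ holds exactly when at least $k$ of the $U_j$ fall in $[0,x]$. The number of such $U_j$ is $\mathrm{Binomial}(K,x)$, so
\[
\mathbb{P}\{U_{(k)} \le x\} = \sum_{j=k}^K \binom{K}{j} x^j (1-x)^{K-j}.
\]
Differentiating in $x$, the sum telescopes to $\frac{K!}{(k-1)!(K-k)!}\, x^{k-1}(1-x)^{K-k}$. This is the $\betadist(k,K-k+1)$ density. Alternatively, one can use the heuristic ``one point in $[x,x+dx]$, $k-1$ below, $K-k$ above'' and make it rigorous through the CDF.

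\emph{Product.} Set $E_k = -\log U_k$. Then $\mathbb{P}\{E_k > t\} = \mathbb{P}\{U_k < e^{-t}\} = e^{-t}$, so the $E_k$ are iid $\mathrm{Exp}(1) = \gam(1,1)$. Since $-\log\prod_{k} U_k = \sum_k E_k$, it remains to show that a sum of $K$ iid $\gam(1,1)$ variables is $\gam(K,1)$. I would do this with moment generating functions: $(1-s)^{-K}$ for $s<1$. An induction using the convolution $\int_0^x \frac{t^{K-2}}{(K-2)!} e^{-t} e^{-(x-t)}\,dt = \frac{x^{K-1}}{(K-1)!}e^{-x}$ works equally well.

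\emph{Irwin--Hall.} This is the only step with any real work. I would first compute the CDF of $S=\sum_k U_k$ by inclusion--exclusion. Write $U_k = V_k$ with the constraint $V_k \le 1$. Then $\mathbb{P}\{S\le x\}$ is the volume of $\{v \in [0,\infty)^K : \sum v_k \le x,\ v_k \le 1 \ \forall k\}$. Inclusion--exclusion over the set $J$ of coordinates violating $v_k \le 1$ gives
\[
\sum_{J} (-1)^{|J|}\,\mathrm{vol}\Big\{\textstyle\sum v_k \le x,\ v_k > 1 \text{ for } k\in J\Big\}.
\]
Shifting $v_k \mapsto v_k - 1$ for $k \in J$, each term becomes a simplex volume $(x-|J|)_+^K/K!$. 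Grouping by $j=|J|$ yields
\[
\mathbb{P}\{S\le x\} = \frac{1}{K!}\sum_{j=0}^{\lfloor x\rfloor}(-1)^j\binom{K}{j}(x-j)^K .
\]
Differentiating termwise gives \eqref{eq:IH}; this is legitimate because each $(x-j)_+^K$ is $C^1$. The main care needed is the bookkeeping that only $j \le \lfloor x\rfloor$ contribute. As a fallback, induction on $K$ via $h_{K}(x)=\int_{x-1}^{x} h_{K-1}(t)\,dt$ also works, using the identity $\binom{K-1}{j}+\binom{K-1}{j-1}=\binom{K}{j}$.
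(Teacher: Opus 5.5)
The paper gives no proof of this lemma. It presents the three items as classical facts about iid uniforms, citing Irwin and Hall (1927) only for the density, so your self-contained derivation goes beyond what the paper offers.

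All three parts of your argument are correct:
\begin{itemize}
\item The derivative of the binomial tail telescopes through $j\binom{K}{j}=K\binom{K-1}{j-1}$ and $(K-j)\binom{K}{j}=K\binom{K-1}{j}$ to the $\betadist(k,K-k+1)$ density.
\item The map $-\log U_k$ gives iid $\mathrm{Exp}(1)$ variables, and either the MGF or the convolution yields $\gam(K,1)$.
\item Inclusion--exclusion over the coordinates exceeding $1$, followed by the shift to simplices of volume $(x-j)_+^K/K!$, gives the Irwin--Hall CDF. Its derivative is \eqref{eq:IH}.
\end{itemize}
Your $C^1$ justification for termwise differentiation needs $K\ge 2$, which is the paper's standing assumption. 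In any case the CDF is a continuous piecewise polynomial, hence absolutely continuous, so piecewise differentiation suffices.

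Your route buys something the citation does not: the explicit CDF $H(x)=\frac{1}{K!}\sum_{j=0}^{\lfloor x\rfloor}(-1)^j\binom{K}{j}(x-j)^K$. This is what Theorem~\ref{thm:indep} uses for the point mass $H(K-1)$ of the linear-product operator. Your formula gives $H(1)=1/K!$, and $\sum_k U_k$ has the same law as $K-\sum_k U_k$, so $H(K-1)=1-1/K!$. This confirms the values $0.5$, $0.83$, $0.96$ quoted after Theorem~\ref{thm:indep}, and the convergence to one.
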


\begin{thm}\label{thm:indep}
If the $K$ IMs are independent, so that $\Pi_1(\Theta), \ldots$, $\Pi_K(\Theta) \iid \unif(0,1)$, then the distributions of the minimum, product, linear-product, average, and geometric average fusion operators take, respectively, the following forms:
 \begin{itemize}
     \item $\Gamma_{\mathrm{min}}(\Theta)\sim \betadist(1,K)$;
     \item $\Gamma_{\mathrm{prod}}(\Theta) \stackrel{d}{=} \exp(-V), \quad V \sim \gam(K,1)$;
     \item $Y=\Gamma_{\mathrm{l.prod}}(\Theta)$ has a point mass $H(K - 1)$ at 0, and density
\[g(y) = h(y + K - 1), \quad  y \in (0, 1],\]
where $h$ is the pdf of the $\IH(K)$ distribution in \eqref{eq:IH} and $H$ is the corresponding cdf;
     \item $\Gamma_{\mathrm{avg}}(\Theta)\sim \frac{1}{K}\IH(K)$;
     \item $\Gamma_{\mathrm{g.avg}}(\Theta) \stackrel{d}{=} \exp(-V/K)$, with $V$ as above.

 \end{itemize}
\end{thm}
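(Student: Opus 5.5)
The plan is to reduce every item to Lemma~\ref{lemma:indep}. We set $U_k = \Pi_k(\Theta)$, $k=1,\ldots,K$, which are iid $\unif(0,1)$ by assumption. Each $\Gamma_f(\Theta)$ is then a deterministic function of $(U_1,\ldots,U_K)$, so its distribution is the push-forward of the joint law of the $U_k$ under $f$. We go through the five operators in turn, writing each $\Gamma_f(\Theta)$ as a monotone transformation of one of the three statistics covered by the lemma.

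\textbf{Minimum.} Here $\Gamma_{\mathrm{min}}(\Theta)=U_{(1)}$, and the first bullet of Lemma~\ref{lemma:indep} with $k=1$ gives $\betadist(1,K)$.

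\textbf{Product and geometric average.} We have $\Gamma_{\mathrm{prod}}(\Theta)=\prod_k U_k=\exp(-V)$ with $V=-\log\prod_k U_k\sim\gam(K,1)$ by the second bullet. Similarly, $\Gamma_{\mathrm{g.avg}}(\Theta)=(\prod_k U_k)^{1/K}=\exp(-V/K)$ for the same $V$. Both identities hold pointwise, so they hold in distribution.

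\textbf{Average.} Let $S=\sum_k U_k\sim\IH(K)$ by the third bullet. Then $\Gamma_{\mathrm{avg}}(\Theta)=S/K$, which is distributed as $\frac1K\IH(K)$.

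\textbf{Linear-product.} This is the only case requiring an actual computation. Again let $S=\sum_k U_k\sim\IH(K)$, with density $h$ and cdf $H$. Then $Y=\Gamma_{\mathrm{l.prod}}(\Theta)=\max\{0,S-K+1\}$.
\begin{itemize}
\item The event $\{Y=0\}$ equals $\{S\le K-1\}$, so $Y$ has an atom at zero of mass $H(K-1)$.
\item On $\{S>K-1\}$ the map $s\mapsto s-K+1$ is a unit-Jacobian translation taking $(K-1,K]$ onto $(0,1]$. Hence, for $y\in(0,1]$, $\mathbb{P}(Y\le y)=H(K-1)+\int_{K-1}^{y+K-1}h(s)\,ds$.
\item Differentiating gives the density $g(y)=h(y+K-1)$ on $(0,1]$.
\end{itemize}
As a consistency check, $H(K-1)+\int_0^1 g=H(K)=1$.

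The only point needing care is this mixed discrete/continuous description for the linear-product. Once $S$ is identified as Irwin--Hall via Lemma~\ref{lemma:indep}, the rest is a routine change of variables. The remaining items are immediate restatements of the lemma.
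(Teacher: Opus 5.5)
Your proposal is correct and takes the same route the paper intends. The paper omits the proof, saying all five claims follow directly from Lemma~\ref{lemma:indep}; your explicit handling of the linear-product case (atom $\{S\le K-1\}$ of mass $H(K-1)$, translated Irwin--Hall density $h(y+K-1)$ on $(0,1]$) correctly supplies the one computation the paper leaves implicit.
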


The proof of the claims in Theorem~\ref{thm:indep} is omitted, since they follow directly from the three results in Lemma~\ref{lemma:indep}. Three remarks are in order. First, although the linear-product operator has an intuitive conjunctive interpretation, it suffers from a serious efficiency problem under independence. Indeed, the point mass at zero, of size $H(K-1)$, increases rapidly with $K$—for example, it is equal to $0.5$, $0.83$, and $0.96$ for $K=2,3,4$, respectively—and converges to one as $K \to \infty$. Consequently, the corresponding validified contour becomes nearly flat even for moderate $K$, making it poorly discriminative and therefore highly inefficient. We therefore do not recommend independent IMs fusion based on the linear-product operator. Second, it is easy to see that the product and geometric-average fusion operators yield identical validified contours, i.e., $\dot\gamma_{\mathrm{g.avg}}(\theta)=\dot\gamma_{\mathrm{prod}}(\theta)$ for all $\theta \in \TT$. The choice between the product and geometric-average operators when the IMs are independent has therefore no impact on the resulting fused IM, despite their seemingly different interpretations. Third, the reader may have noticed similarities to methods proposed for merging independent $p$-values in the statistics literature, e.g., \citet{Fisher32,Pearson1934,cousins2008annotated,OwenPvalue}. This is not surprising, given the connections between $p$-values and IM contours; see Section~\ref{s:back}. These connections will also play an important role in the validification of dependent and exchangeable IMs in the subsequent sections.

As an illustration, consider the case where the $K=3$ contours shown in Figure~\ref{fig:IM_contours} are independent. Figure~\ref{fig:indep_fusion} shows the fused IMs obtained using the FVN strategy, with validification carried out using the distributions in Theorem~\ref{thm:indep}. The extreme inefficiency of the linear-product-based IM is confirmed. 


\begin{figure}[t]
\begin{center}
\scalebox{0.5}{\includegraphics{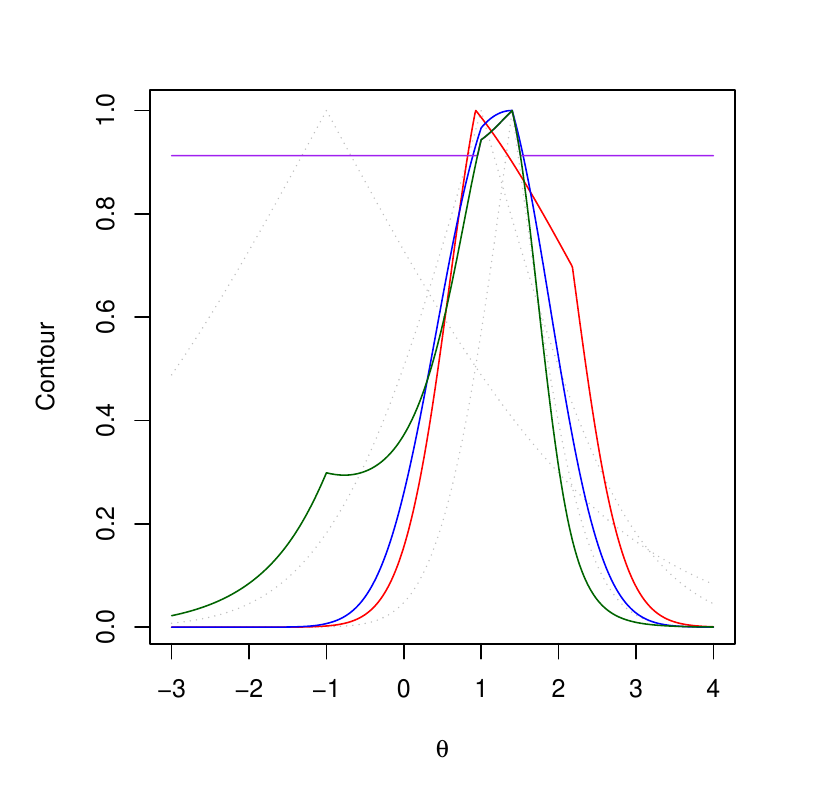}}
\end{center}
\caption{Fused IM contours obtained via the FVN strategy using the minimum (red), product/geometric-average (blue), linear-product (purple), and average (green) fusion operators, applied to the three independent IM contours (grey).}
\label{fig:indep_fusion}
\end{figure}

\section{Fusing dependent IMs}
\label{s:dependent}
We now consider the case where the $K$ IMs may be arbitrarily dependent—for example, when they are constructed using partially shared data, or when no assumptions are made about the nature of their dependence. If all we know about $\Pi_1(\Theta), \ldots, \Pi_K(\Theta)$ is that each is marginally $\unif(0,1)$, then validifying $\gamma_f$ as was done in Section~\ref{S:indep} is infeasible, as this limited information is insufficient to determine the distribution of  $\Gamma_f(\Theta)$. New strategies are required in this setting.

The first strategy we explore consists of deriving an upper bound for the distribution of $\Gamma_f(\Theta)$. Consider $\Gamma_{min}$. An upper bound for its cdf can be easily obtained by the well-known Bonferroni correction:
\begin{align}\label{eq:Bonf}
G_{min}(t) = \mathbb{P}_\Theta\{\Gamma_{min}(\Theta) \leq t\} 
&=\mathbb{P}_\Theta\left\{\bigcup_{k=1}^K \{\Pi_k(\Theta) \le t\}\right\}, \quad t \in [0,1], \nonumber \\
&\leq \sum_{k=1}^K \mathbb{P}_\Theta\{\Pi_k(\Theta) \leq t\} = Kt.
\end{align}

This upper bound suggests that $\dot\gamma_{min}(\theta) = \min\{1,K\gamma_{min}(\theta)\}$ serves as an appropriate validified minimum fusion function in the arbitrarily dependent setting. Validity follows because this construction inflates what the validification would have produced if $G_{min}$ were available. Although this strategy may appear specific to the minimum fusion rule, the underlying idea—upper bounding the cdf of $\Gamma_f(\Theta)$ and basing the validification on that bound—is fully general. 

\begin{prop}
Let $G_f(t) = \mathbb{P}_\Theta\{\Gamma_f(\Theta) \le t\}$ denote the distribution function of $\Gamma_f(\Theta)$. If $G_f(t) \le h(t)$ for every $t \in [0,1]$, then
\[\dot{\gamma}_f(\theta) = \min\{1,\, h(\gamma_f(\theta))\}\]
defines a validified fusion rule in the sense of \eqref{eq:ValDotGamma}.
\end{prop}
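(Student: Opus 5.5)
The plan is to reduce the claim to the probability integral transform already used in the proof of the FVN theorem, via a simple event inclusion. Write $\Gamma = \Gamma_f(\Theta)$ and $G = G_f$. For $\alpha \ge 1$ the bound is trivial. So fix $\alpha \in [0,1)$; then
\[\{\min\{1, h(\Gamma)\} \le \alpha\} = \{h(\Gamma) \le \alpha\},\]
because the truncation at $1$ cannot bring the value down to $\alpha<1$. (At $\alpha=1$ the bound $\le 1$ is immediate.)

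The key step is the pointwise inequality $G(t) \le h(t)$ for every $t\in[0,1]$. Applied at $t=\Gamma$, which takes values in $[0,1]$ since $f$ maps into $[0,1]$, it gives
\[\{h(\Gamma) \le \alpha\} \subseteq \{G(\Gamma) \le \alpha\}.\]
Note that no monotonicity of $h$ is needed for this step, only the pointwise domination. Hence
\[\mathbb{P}_\Theta\{\dot\Gamma_f(\Theta) \le \alpha\} \le \mathbb{P}_\Theta\{G(\Gamma) \le \alpha\}.\]

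It remains to show $\mathbb{P}\{G(\Gamma)\le\alpha\}\le\alpha$ for an arbitrary real random variable $\Gamma$ with cdf $G$. This is the standard fact invoked in the proof of the FVN theorem, but since $\Gamma$ may have atoms (for instance the linear-product case with its mass at zero), I would verify it directly:
\begin{itemize}
\item Let $t^* = \sup\{t : G(t) \le \alpha\}$.
\item If $G(t^*) \le \alpha$, then by monotonicity of $G$ the event $\{G(\Gamma)\le\alpha\}$ equals $\{\Gamma \le t^*\}$, whose probability is $G(t^*)\le\alpha$.
\item Otherwise the event equals $\{\Gamma < t^*\}$, whose probability is $\lim_{t\uparrow t^*}G(t)\le\alpha$.
\end{itemize}

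The only delicate point is this atom handling in the probability integral transform. Everything else is the event inclusion, and combining the three displays gives \eqref{eq:ValDotGamma}.
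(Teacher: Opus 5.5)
Your proof is correct and follows the paper's reasoning. The paper gives no separate proof; it only remarks that $\min\{1,h(\gamma_f)\}$ inflates the exact validification $G_f(\gamma_f)$, which is valid by the probability integral transform used in the proof of the FVN theorem. Your explicit handling of the truncation at $1$ and of atoms of $\Gamma_f(\Theta)$ in that transform fills in details the paper leaves implicit.
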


The challenge, of course, lies in identifying useful upper bounds. As a first strategy, we note that the minimum bound above can be leveraged to obtain corresponding bounds for the remaining four fusion operators considered here. They are reported in Table~\ref{tab:ub}. The bounds for the average and geometric average follow from the fact that each of these is always greater than or equal to the minimum. The bounds for the product and linear-product come from the observation that $\Gamma_{prod}(\Theta) \leq t$ implies $\Gamma_{min}(\Theta) \leq t^{1/K}$ and $\Gamma_{l,prod}(\Theta) \leq t$ implies $\Gamma_{min}(\Theta) \leq 1 - \frac{1-t}{K}$. Note that the resulting upper bound for the linear-product is trivial, as it is greater than or equal to one for all $t\in(0,1)$.

The minimum-based upper bounds for the product, average, and geometric average fusion operators are nontrivial. However, from an efficiency standpoint, it is natural to ask whether operator-specific bounds can be derived instead of relying solely on the minimum-based ones. 

Towards this, start by noticing the equivalence
\[\Gamma_{min}(\Theta)\leq t \quad \quad \iff\quad  \quad \frac1K\sum_{k=1}^K \mathbbm{1}\{\Pi_k(\Theta) \leq t\} \geq \frac1K. \]
In words, $\Gamma_{min}(\Theta) \leq t$ means that at least one $\Pi_k(\Theta), k=1, \ldots, K$, is at most $t$. This equivalence is particularly useful because the right-hand side formulation is directly suited for applying Markov's inequality:
\[\mathbb{P}_\Theta\left\{\frac1K\sum_{k=1}^K \mathbbm{1}\{\Pi_k(\Theta) \leq t\} \geq \frac1K\right\} \leq K\mathbb{E}\left[\mathbbm{1}\{\Pi_1(\Theta) \leq t\} \right] = Kt,\]
leading to the same upper bound as in \eqref{eq:Bonf}. This same strategy---expressing the event $\Gamma_f(\Theta) \leq t$ in a form that naturally lends itself to an application of Markov’s inequality---is what we use to derive alternative upper bounds for the remaining fusion operators. Specific equivalent events chosen for each fusion operator, along with the resulting bounds, are displayed in Table~\ref{tab:ub}.

\begin{table}[tb]
\caption{Upper bounds for $\mathbb{P}_\Theta\{\Gamma_f(\Theta)\le t\}$, $t \in (0,1)$, based on the minimum fusion rule and on Markov's inequality.}
\renewcommand{\arraystretch}{1.3}
\setlength{\tabcolsep}{6pt}
\centering
\begin{tabular}{c c c c}
\hline
$\gamma_f$ & Min-based bound & $\Gamma_f(\Theta)\le t \iff$ & Markov bound \\
\hline
$\gamma_{\text{prod}}$
& $K t^{1/K}$
& $\displaystyle \frac{1}{K}\sum_{k=1}^K -\ln \Pi_k(\Theta)
   \ge \frac{-\ln t}{K}$
& $\displaystyle \frac{-K}{\ln t}$ \\[6pt]

$\gamma_{\text{l.prod}}$
& $K-1+t$
& $\displaystyle \frac{1}{K}\sum_{k=1}^K 1-\Pi_k(\Theta)
   \ge \frac{1-t}{K}$
& $\displaystyle \frac{K}{2(1-t)}$ \\[6pt]

$\gamma_{\text{avg}}$
& $K t$
& $\displaystyle \frac{1}{K}\sum_{k=1}^K 1-\Pi_k(\Theta)
   \ge 1-t$
& $\displaystyle \frac{1}{2(1-t)}$ \\[6pt]

$\gamma_{\text{g.avg}}$
& $K t$
& $\displaystyle \frac{1}{K}\sum_{k=1}^K -\ln \Pi_k(\Theta)
   \ge -\ln t$
& $\displaystyle \frac{-1}{\ln t}$ \\
\hline
\end{tabular}
\label{tab:ub}
\end{table}

Once again no nontrivial bound is obtained for the linear-product operator since $\frac{K}{2(1 - t)} \ge 1$ for all $t \in (0,1)$. For the remaining fusion operators, note that, when $K=2$, the minimum-based bounds are uniformly tighter than the Markov bounds. When $K>2$, neither bound uniformly dominates the other over $t \in (0,1)$.

The second strategy we explore for the validification of $K$ arbitrarily dependent IMs differs from the first in that, rather than manipulating $\Gamma_f(\Theta) \le t$, we work directly with $\dot{\Gamma}_f(\Theta) \le t$. This approach is closely related to recent developments in the p-value merging literature \citep{vovk2020combining,vovk2022admissible}, so we begin with a brief review.

Let $(\Omega,\mathcal{F},\prob)$ be a probability space. A p-value for testing $\prob$ is a measurable function $p: \Omega \to [0,1]$ that quantifies the compatibility between the observed data in $\Omega$ and $\prob$, with larger (smaller) values indicating greater (less) compatibility. As a random variable under $\prob$ it is stochastically no smaller than \unif(0,1), therefore satisfying
\begin{equation}\label{eq:PVvalidity}
 \prob\{P \leq \alpha\} \leq \alpha, \quad \text{for all } \alpha \in (0,1).
\end{equation}

Suppose $\boldsymbol{p}=p_1, \ldots, p_K$ are $K$ p-values, $K \geq 2$. The goal is to merge these into one p-value. Therefore, a \emph{p-merging} function $W : [0,1]^K \to [0,1]$ satisfies
\[\prob\{W(\boldsymbol{P}) \leq \alpha\} \leq \alpha, \quad \text{for all } \alpha \in (0,1). \]
It is {\em symmetric} if $W(\boldsymbol{p})$ is invariant under any permutation of $\boldsymbol{p}$, and it is {\em homogeneous} if $W(a\boldsymbol{p}) = aW(\boldsymbol{p})$
for all $\boldsymbol{p}$ with $W(\boldsymbol{p})\leq 1$ and $a \in (0,1)$. 
The rejection region at level $\alpha \in (0,1)$ of $W$ is defined as
\[R_\alpha(W) = \{\boldsymbol{p} \in [0,1]^K : W(\boldsymbol{p}) \leq \alpha\}.\] 
Of course, $\prob\{\boldsymbol{P} \in R_\alpha \} \leq \alpha$. As we will see below, rejection regions are a key ingredient in p-value merging techniques.

An e-variable is a non-negative extended random variable $E: \Omega \to [0,\infty]$ with $\mathbb{E}[E]\leq 1$. A calibrator is a decreasing function $g:[0,\infty) \rightarrow [0,\infty]$ satisfying $g=0$ on $(1,\infty)$ and $\int_0^1g(p)dp \leq1$. The purpose of a calibrator is to transform any p-value into an e-value. The motivation is that merging p-values into a p-value is a difficult and context dependent task. On the other hand, merging e-values into an e-value is done by simply averaging them. Moreover, an e-value can be transformed into a p-value by a reciprocal operation. Thus, although e-values have broader roles in modern statistics; see, e.g., \citet{grunwald2024safe,ramdas2023game,RamdasEVbook}; in the p-value merging literature they primarily serve as an intermediate tool.

A key result for our purposes is that valid p-merging rules can be constructed
in terms of calibrators. In particular, \citet{vovk2022admissible} show that,
for any calibrator $g$, the rejection region
\begin{equation}\label{eq:pvRR}
R_\alpha
=
\left\{
\boldsymbol{p} \in [0,1]^K :
\frac{1}{K} \sum_{k=1}^K g\!\left(\frac{p_k}{\alpha}\right) \geq 1
\right\}, \quad \text{for each }\alpha \in (0,1),
\end{equation}
defines a symmetric and homogeneous p-merging function.

But how can the above results help in the fusion of arbitrarily dependent IMs? The key is to note that $\pi_1(\theta), \ldots, \pi_K(\theta)$ can be viewed as $K$ p-value analogues for testing $\Theta = \theta$. This suggests seeking
validified fusion rules whose rejection event $\dot\gamma_f(\theta)\le \alpha$, $\alpha \in (0,1)$,
can be rewritten in the calibrator form
\begin{equation}\label{eq:calibrator}
\frac{1}{K}\sum_{k=1}^K g\!\left(\frac{\pi_k(\theta)}{\alpha}\right)\ge 1
\end{equation}
for some calibrator $g$, since such representations yield valid p-merging functions.

This formulation is, however, too general to be directly useful, unless some
structure is imposed on $\dot\gamma_f$. For example, we can restrict
$\dot\gamma_f$ to a linear validification of the form $c\gamma_f$, with $c>0$.
The strategy is therefore to identify values of $c$ for which the rejection event
$c\gamma_f(\theta) \le \alpha$ admits a representation of the form \eqref{eq:calibrator}
for some calibrator $g$,
leading to the validified fusion operator
\begin{equation}\label{eq:LinearVal}
\dot\gamma_f(\theta) = \min\{1,c\gamma_f(\theta)\}.
\end{equation}
The next theorem establishes these values of $c$ for the minimum, average,
and geometric average fusion operators. The product and linear-product fusion operators cannot be validified through this strategy because, even though they are symmetric, they are not homogeneous. The subsequent theorem shows that a linear validification of these operators is actually impossible.

\begin{thm}
\label{thm:CvalsDep}
For the minimum, average, and geometric-average fusion operators, the choices
$c=K$, $c=2$, and $c=e$, respectively, allow the rejection event
$c\gamma_f(\theta)\le \alpha$, $\alpha \in (0,1)$, to be rewritten in the calibrator form
\eqref{eq:calibrator}. 
\end{thm}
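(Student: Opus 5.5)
The plan is to exhibit, for each of the three operators, an explicit calibrator $g$: decreasing, zero on $(1,\infty)$, with $\int_0^1 g(p)\,dp\le 1$. We then check that the event $c\gamma_f(\theta)\le\alpha$ agrees with, or at least sits inside, the event in \eqref{eq:calibrator}. Validity of $\dot\gamma_f=\min\{1,c\gamma_f\}$ in the sense of \eqref{eq:ValDotGamma} then follows from the result of \citet{vovk2022admissible} recalled around \eqref{eq:pvRR}, applied with $\pi_k(\theta)$ playing the role of $p_k$. In each case the calibrator is suggested by rewriting the rejection event as an average of a decreasing transformation of $\pi_k(\theta)/\alpha$ being at least $1$.

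\emph{Minimum, $c=K$.} The event $K\min_k\pi_k(\theta)\le\alpha$ holds iff $\pi_k(\theta)/\alpha\le 1/K$ for at least one $k$. Take $g(p)=K\,\mathbbm{1}\{p\le 1/K\}$. It is decreasing, vanishes on $(1,\infty)$, and $\int_0^1 g=1$. Then $\frac1K\sum_k g(\pi_k(\theta)/\alpha)\ge 1$ iff at least one indicator equals one, so the two events coincide exactly. This reproduces the Bonferroni bound \eqref{eq:Bonf}.

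\emph{Average, $c=2$.} The event $\frac{2}{K}\sum_k\pi_k(\theta)\le\alpha$ can be rewritten as
\[
\frac1K\sum_{k=1}^K 2\Bigl(1-\frac{\pi_k(\theta)}{\alpha}\Bigr)\ge 1 .
\]
The natural candidate is $g(p)=2(1-p)_+$, which is decreasing, zero on $(1,\infty)$, and satisfies $\int_0^1 2(1-p)\,dp=1$.

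\emph{Geometric average, $c=e$.} Taking logarithms, $e\bigl(\prod_k\pi_k(\theta)\bigr)^{1/K}\le\alpha$ iff
\[
\frac1K\sum_{k=1}^K\Bigl(-\log\frac{\pi_k(\theta)}{\alpha}\Bigr)\ge 1 .
\]
The candidate is $g(p)=(-\log p)_+$, which is decreasing, zero on $(1,\infty)$, and satisfies $\int_0^1(-\log p)\,dp=1$.

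The main obstacle is the positive part in the last two cases. Since $\alpha<1$, the ratio $\pi_k(\theta)/\alpha$ can exceed $1$, and there the linear or logarithmic expression is negative while $g$ is truncated at zero. So the rewritten event and the calibrator event are not literally identical. The way I would handle this is to use $x_+\ge x$, which shows the $c\gamma_f$ rejection event is contained in the calibrator rejection region \eqref{eq:calibrator}. Containment in a valid rejection region is all that validity requires, because $\mathbb{P}_\Theta\{c\Gamma_f(\Theta)\le\alpha\}\le\mathbb{P}_\Theta\{\boldsymbol\Pi(\Theta)\in R_\alpha\}\le\alpha$.

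If an exact rewriting is demanded, I would instead note that the truncated terms only make the calibrator average larger. I would then argue via symmetry and homogeneity of the merging function (the homogeneity property makes the dependence on $\alpha$ pass through the ratios $\pi_k/\alpha$ consistently). Alternatively, I would cite the known optimality of the constants $2$ and $e$ for the arithmetic and geometric means from \citet{vovk2020combining}. The point of that citation is to confirm that these constants are exactly the ones at which the calibrator representation is attained.
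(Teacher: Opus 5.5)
Your proposal follows the paper's proof: the same algebraic rewriting of $c\gamma_f(\theta)\le\alpha$, the same calibrators $K\mathbbm{1}\{x\le 1/K\}$, $(2-2x)_+$ and $(-\ln x)_+$, and the same appeal to \citet{vovk2022admissible}. Your containment step via $x_+\ge x$ is more careful than the paper, which passes from the untruncated sums to the truncated calibrators without comment.

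Drop your final paragraph, though. For the average and geometric average the two events genuinely differ. For example, take $K=2$, $\alpha=1/2$, $\pi_1(\theta)=0$, $\pi_2(\theta)=1$: then $\frac12\sum_k\bigl(2-2\pi_k(\theta)/\alpha\bigr)_+=1$, while $2\gamma_{\mathrm{avg}}(\theta)=1>\alpha$. So neither homogeneity nor the optimality of the constants $2$ and $e$ can turn containment into equivalence. Containment is, however, all that validity requires.
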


\begin{proof}
The following equivalences follow by straightforward algebra:
\begin{align*}
c \gamma_{\mathrm{min}}(\theta) \leq \alpha
&\iff
\frac1K \sum_{k=1}^K K\mathbbm{1}\left\{\frac{\pi_k(\theta)}{\alpha} \leq \frac1c\right\} \geq 1, \\
c \gamma_{\mathrm{avg}}(\theta) \leq \alpha
&\iff
\frac{1}{K} \sum_{k=1}^K\left(c - c\frac{\pi_k(\theta)}{\alpha}\right) \geq c-1, \\
c \gamma_{\mathrm{g.avg}}(\theta) \leq \alpha
&\iff
\frac{1}{K} \sum_{k=1}^K - \ln\left(\frac{\pi_k(\theta)}{\alpha}\right) \geq \ln c.
\end{align*}
Setting $c=K$, $c=2$, and $c=e$, respectively, yields the calibrator form in
\eqref{eq:calibrator} with
$g(x)=K\mathbbm{1}\{x\le 1/K\}$,
$g(x)=(2-2x)_+$, and 
$g(x)=(-\ln x)_+$.
\end{proof}

\begin{thm}\label{thm:impossible-linear-validification}
For the product and linear-product fusion operators, there is no finite constant $c>0$ such that, for \emph{all} dependence structures among
$\Pi_1(\Theta),\ldots,\Pi_K(\Theta)$, the linear validification in \eqref{eq:LinearVal}
satisfies the calibration property in \eqref{eq:ValDotGamma}.
\end{thm}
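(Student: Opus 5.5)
The plan is to prove the negative result by exhibiting, for each fixed $c>0$, a dependence structure among $\Pi_1(\Theta),\ldots,\Pi_K(\Theta)$, each marginally $\unif(0,1)$, for which $\mathbb{P}_\Theta\{\min\{1,c\Gamma_f(\Theta)\}\le\alpha\}>\alpha$ for some $\alpha\in(0,1)$. Since a single counterexample per $c$ suffices, it is enough to treat the case $K=2$ and embed it into general $K$ by taking $\Pi_3(\Theta)=\cdots=\Pi_K(\Theta)$ as suitable copies. The copies should be chosen so that the additional factors or terms only help: in the product, extra factors in $[0,1]$ only decrease $\Gamma_{\mathrm{prod}}$. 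For the linear-product, I would use instead a construction symmetric in all $K$ coordinates.

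\textbf{Product operator.} The key observation is that even the comonotone structure $\Pi_1(\Theta)=\cdots=\Pi_K(\Theta)=U\sim\unif(0,1)$ already breaks linear validification. In that case $\Gamma_{\mathrm{prod}}(\Theta)=U^K$, so
\[
\mathbb{P}_\Theta\{cU^K\le\alpha\}=(\alpha/c)^{1/K}.
\]
This exceeds $\alpha$ whenever $\alpha^{1-1/K}<c^{-1/K}$, i.e.\ for all sufficiently small $\alpha$, whatever $c$ is. This mirrors the min-based bound $Kt^{1/K}$ in Table~\ref{tab:ub}: the cdf of $\Gamma_{\mathrm{prod}}$ can behave like $t^{1/K}$ near zero, which no linear function dominates. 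The product case is therefore immediate.

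\textbf{Linear-product operator.} Here the danger is the atom at zero. I would use a countermonotone or \emph{joint-mix} type construction in which $\sum_k \Pi_k(\Theta)\le K-1$ with positive probability mass $p>0$, independent of $c$. On that event $\Gamma_{\mathrm{l.prod}}(\Theta)=0$, so $\min\{1,c\Gamma\}=0\le\alpha$ for every $\alpha$. Taking any $\alpha<p$ then gives $\mathbb{P}_\Theta\{\dot\Gamma\le\alpha\}\ge p>\alpha$.
\begin{itemize}
\item For $K=2$, take $\Pi_2(\Theta)=1-\Pi_1(\Theta)$. Then $\Pi_1+\Pi_2=1=K-1$, so $\Gamma_{\mathrm{l.prod}}\equiv0$ almost surely.
\item For general $K$, the goal is to make the sum identically $K/2\le K-1$ (valid for $K\ge2$) using a complete mix of uniforms, which exists for all $K\ge2$ by known results. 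A simpler alternative is to take independent uniforms: by Theorem~\ref{thm:indep}, the atom at zero is $H(K-1)>0$, and already $H(K-1)\ge1/2$.
\end{itemize}
The independent choice is the cleanest, since the theorem quantifies over all dependence structures and independence is one of them. It gives $\mathbb{P}_\Theta\{\dot\Gamma_{\mathrm{l.prod}}(\Theta)\le\alpha\}\ge H(K-1)>\alpha$ for every $\alpha<H(K-1)$.

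\textbf{Main obstacle.} The only delicate point is making sure the counterexample is uniform in the choice of $c$. In both cases the violation must occur at a level $\alpha$ that may depend on $c$, and this is allowed since \eqref{eq:ValDotGamma} is required for all $\alpha$. I expect no real difficulty beyond verifying that the chosen joint laws have exact $\unif(0,1)$ marginals, which is trivial for the comonotone and independent couplings used above.
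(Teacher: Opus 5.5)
Your proposal is correct, and for the product operator it is exactly the paper's argument: the comonotone law $\Pi_1(\Theta)=\cdots=\Pi_K(\Theta)=U$ gives $\mathbb{P}_\Theta\{cU^K\le\alpha\}=(\alpha/c)^{1/K}>\alpha$ for all sufficiently small $\alpha$. For the linear-product the paper reuses the same comonotone law, whose atom at zero is $(K-1)/K$, whereas you use independence (atom $H(K-1)=1-1/K!$) or a countermonotone/joint-mix coupling (atom $1$); the mechanism is identical in both---a $c$-free atom of $\Gamma_{\mathrm{l.prod}}(\Theta)$ at zero that any linear validification leaves at zero---and your independent version additionally shows that the failure occurs even without adversarial dependence.
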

\begin{proof}
It is sufficient to find a joint law for
$\bigl(\Pi_1(\Theta),\ldots,\Pi_K(\Theta)\bigr)$ and some $\alpha\in(0,1)$ such that, for any finite $c>0$, \eqref{eq:ValDotGamma} is violated.
Towards this, consider the case where
\[\Pi_1(\Theta)=\cdots=\Pi_K(\Theta)=U,\qquad U\sim \unif(0,1).\]
For the product fusion operator, $\Gamma_{\mathrm{prod}}(\Theta)=U^K$. Let $\alpha < c$, so
\[\mathbb{P}_\Theta\{\dot\Gamma_{\mathrm{prod}}(\Theta)\le \alpha\}
=\mathbb{P}_\Theta\{cU^K\le \alpha\}
=\bigl(\alpha/c\bigr)^{1/K}.\]
For any $\alpha\in\bigl(0,\min\{c,\,c^{-1/(K-1)}\}\bigr)$, $\bigl(\alpha/c\bigr)^{1/K}>\alpha$, therefore violating \eqref{eq:ValDotGamma}. Now, for the linear-product fusion operator, 
\[\Gamma_{\mathrm{l.prod}}(\Theta)=\max\{0, KU-(K-1)\},\]
so $\Gamma_{\mathrm{l.prod}}(\Theta)=0$ whenever $U\le (K-1)/K$. Consequently,
\[
\mathbb{P}_\Theta\{\dot\Gamma_{\mathrm{l.prod}}(\Theta)\le \alpha\}
\ge \mathbb{P}_\Theta\{c\Gamma_{\mathrm{l.prod}}(\Theta)=0\}
=\frac{K-1}{K}.
\]
Therefore, for any $\alpha<(K-1)/K$, \eqref{eq:ValDotGamma} is violated.
\end{proof}

Unsurprisingly, the p-value–based validification of the minimum fusion operator coincides with its Markov-based counterpart. For the average and geometric-average operators, however, the Markov-based validifications are nonlinear, so the p-value–based strategy naturally produces different results. These also differ from the minimum-based validifications, except for the average operator when $K=2$, where the two approaches coincide. From an efficiency standpoint, the p-value–based validification is uniformly the most efficient for the average operator. The same holds for the geometric-average operator when $K\geq 3$; when $K=2$, however, the minimum-based strategy yields the most efficient validification.

None of the techniques considered in this section provide a meaningful validification for the linear-product fusion operator under arbitrary dependence, and we therefore do not recommend its use in this setting. Although a linear validification is impossible for the product operator, nonlinear validifications based on the minimum bound or Markov’s inequality remain applicable. Table~\ref{tab:RecDep} summarizes the recommended validifications for each fusion operator under arbitrary dependence, based on the preceding efficiency considerations. Figure~\ref{fig:dep_fusion} displays the fused IM contours obtained using these recommendations for the $K=3$ contours shown in Figure~\ref{fig:IM_contours}, now assumed dependent (with $-K/\ln \gamma_{\mathrm{prod}}$ used for the product).

\begin{table}[tb]
\caption{Recommended $h(\gamma_f)$ in the validification 
$\dot{\gamma}_f(\theta)=\min\{1,h(\gamma_f(\theta))\}$ 
for $K$ arbitrarily dependent IM contours fused under the fusion operator $f$.}
\centering
\renewcommand{\arraystretch}{1.2}
\setlength{\tabcolsep}{12pt}
\begin{tabular}{c c c}
\hline
$\gamma_f$ & $K=2$ & $K>2$ \\
\hline

$\gamma_{\mathrm{min}}$
& $2\gamma_{\mathrm{min}}$
& $K\gamma_{\mathrm{min}}$ \\

$\gamma_{\mathrm{prod}}$
& $2\gamma_{\mathrm{prod}}^{1/2}$
& $K\gamma_{\mathrm{prod}}^{1/K}$ \quad or \quad 
$\dfrac{-K}{\ln \gamma_{\mathrm{prod}}}$ \\

$\gamma_{\mathrm{avg}}$
& $2\gamma_{\mathrm{avg}}$
& $2\gamma_{\mathrm{avg}}$ \\

$\gamma_{\mathrm{g.avg}}$
& $2\gamma_{\mathrm{g.avg}}$
& $e\gamma_{\mathrm{g.avg}}$ \\

\hline
\end{tabular}
\label{tab:RecDep}
\end{table}

\begin{figure}[t]
\begin{center}
\scalebox{0.5}{\includegraphics{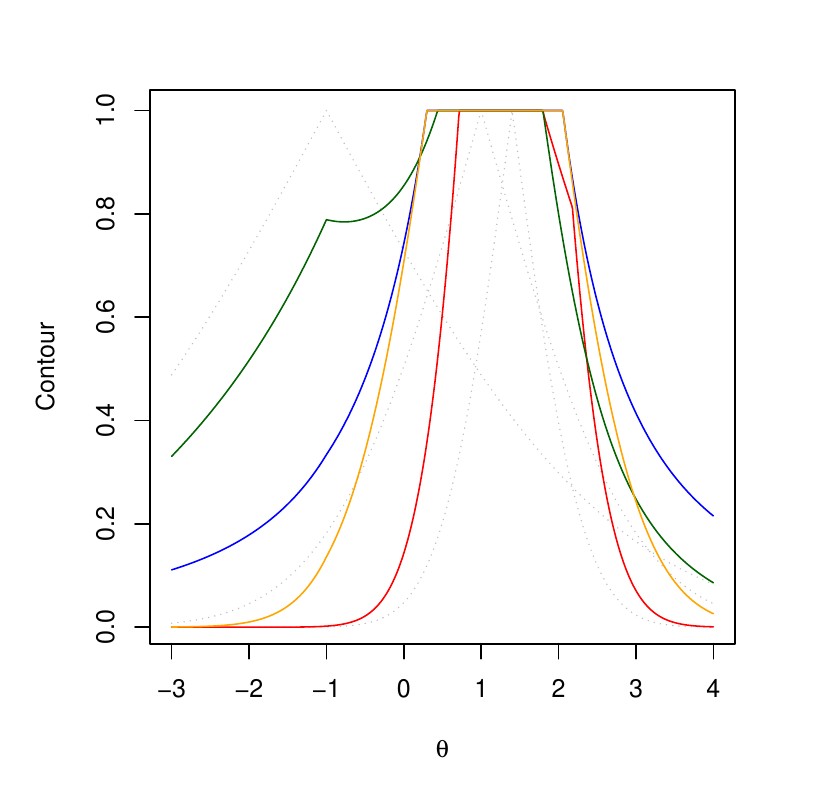}}
\end{center}
\caption{Fused IM contours obtained via the FVN strategy using the minimum (red), product (blue), average (green), and geometric average (orange) fusion operators, applied to the three dependent IM contours (grey).}
\label{fig:dep_fusion}
\end{figure}

\section{Fusing exchangeable IMs}
\label{s:exch}
We now consider an important intermediate case: the $K$ IMs are assumed to be exchangeable. Exchangeability occupies a spot somewhere in between independence and general dependence. While independence implies exchangeability, the converse does not hold. In fact, exchangeability can be viewed as a structured form of dependence, in which the joint distribution of the IM contours is invariant to permutations of their indices. This additional structure is still insufficient to determine the distribution of $\Gamma_f(\Theta)$, which prevents the ideal validification as done in Section~\ref{S:indep}. Fortunately, validification as in Section~\ref{s:dependent} remains feasible, and the following recent result introduced in \citet{ramdas2026randomized} suggests a path forward.

\begin{thm}[Exchangeable Markov Inequality]
    Let $X_1, \ldots, X_K$ be a set of exchangeable, non-negative and integrable random variables. Then, for any $a>0$,
    \[\mathbb{P}\left(\exists k\leq K : \frac{1}{k}\sum_{i=1}^kX_i\geq \frac{1}{a}\right) \leq a\mathbb{E}[X_1].\]
\end{thm}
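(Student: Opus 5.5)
The plan is to recognize the running averages $M_k = \frac1k\sum_{i=1}^k X_i$, $k=1,\ldots,K$, as a \emph{reverse martingale} under exchangeability. Time-reversed, they form a forward non-negative martingale, so Doob's maximal inequality applies and gives exactly the bound $a\,\mathbb{E}[X_1]$. The event in the statement is $\{\max_{k\le K} M_k \ge 1/a\}$, so it suffices to show $\mathbb{P}(\max_{k\le K} M_k\ge \lambda)\le \mathbb{E}[X_1]/\lambda$ with $\lambda = 1/a$.

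First, set up the filtration. Let $S_k=\sum_{i=1}^k X_i$ and define $\mathcal{G}_k=\sigma(S_k, X_{k+1},\ldots,X_K)$ for $k=1,\ldots,K$. Since $S_k = S_{k+1}-X_{k+1}$, we have $\mathcal{G}_k\supseteq\mathcal{G}_{k+1}$, so the family decreases in $k$, and each $M_k$ is $\mathcal{G}_k$-measurable.

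Second, prove the reverse martingale identity $\mathbb{E}[M_k\mid\mathcal{G}_{k+1}]=M_{k+1}$. Take any permutation $\sigma$ of $\{1,\ldots,k+1\}$ that fixes the remaining indices. By exchangeability, $(X_{\sigma(1)},\ldots,X_{\sigma(k+1)},X_{k+2},\ldots,X_K)$ has the same law as $(X_1,\ldots,X_K)$. The generators $S_{k+1},X_{k+2},\ldots,X_K$ of $\mathcal{G}_{k+1}$ are invariant under such $\sigma$. Hence for every bounded $\mathcal{G}_{k+1}$-measurable $Z$ we have $\mathbb{E}[X_iZ]=\mathbb{E}[X_jZ]$ for all $i,j\le k+1$, which gives $\mathbb{E}[X_i\mid\mathcal{G}_{k+1}]=\mathbb{E}[X_1\mid\mathcal{G}_{k+1}]$ for $i\le k+1$. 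Summing over $i\le k+1$ yields $\mathbb{E}[X_i\mid\mathcal{G}_{k+1}]=S_{k+1}/(k+1)$, because $S_{k+1}$ is $\mathcal{G}_{k+1}$-measurable. Summing instead over $i\le k$ and dividing by $k$ gives the identity. Integrability of the $X_i$ ensures all these conditional expectations exist.

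Third, reverse time. Put $Y_j=M_{K+1-j}$ and $\mathcal{F}_j=\mathcal{G}_{K+1-j}$ for $j=1,\ldots,K$. Then $(\mathcal{F}_j)$ is increasing and $(Y_j,\mathcal{F}_j)$ is a non-negative martingale. Doob's maximal inequality gives
\[\mathbb{P}\Big(\max_{j\le K}Y_j\ge\lambda\Big)\le \frac{\mathbb{E}[Y_K]}{\lambda}=\frac{\mathbb{E}[M_1]}{\lambda}=\frac{\mathbb{E}[X_1]}{\lambda}.\]
Since $\max_j Y_j=\max_k M_k$, taking $\lambda=1/a$ concludes the argument.

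The only delicate step is the second one: justifying rigorously the symmetry of the conditional expectations given $\mathcal{G}_{k+1}$, using only finite exchangeability. I would handle it by the invariance argument above. One writes $Z=\phi(S_{k+1},X_{k+2},\ldots,X_K)$ for a bounded measurable $\phi$ (Doob–Dynkin). Swapping $X_i$ and $X_j$ leaves $Z$ unchanged and, by exchangeability, preserves the joint law, so $\mathbb{E}[X_iZ]=\mathbb{E}[X_jZ]$. Doob's inequality itself is standard.
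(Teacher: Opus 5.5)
The paper does not prove this theorem. It quotes it from \citet{ramdas2026randomized} and uses it as a black box, so there is no in-paper argument to compare yours against.

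Your proof is correct and complete, and it is the standard reverse-martingale route to this inequality. The key step is your transposition argument giving $\mathbb{E}[X_i\mid\mathcal{G}_{k+1}]=S_{k+1}/(k+1)$ for $i\le k+1$. Integrability of $X_iZ$ follows from boundedness of $Z$. That step makes the prefix means $M_k$ a reverse martingale. Doob's maximal inequality for the time-reversed non-negative martingale then gives the bound via the terminal expectation $\mathbb{E}[M_1]=\mathbb{E}[X_1]$.

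Compared with the paper's bare citation, your argument is self-contained and makes clear where exchangeability enters. It is exactly what makes the running prefix averages a reverse martingale, and that is the source of the ``$\exists k\le K$'' structure the paper has to build into $\gamma_f^{\mathrm{ex}}$. It also delivers the slightly sharper Doob form $\lambda\,\mathbb{P}(\max_k M_k\ge\lambda)\le\mathbb{E}[X_1\mathbbm{1}\{\max_k M_k\ge\lambda\}]$.
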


This result suggests that the strategy in Section~\ref{s:dependent} of finding an equivalent event to $\Gamma_f(\Theta)\leq t$ that is suitable for the application of Markov’s inequality can be adapted here. However, solely considering $\Gamma_f(\Theta)\leq t$ with the knowledge that $\Pi_1(\Theta), \ldots, \Pi_K(\Theta)$ are exchangeable is not sufficient to derive events compatible with the exchangeable Markov inequality because it lacks the required ``$\exists k\le K$'' structure. This indicates that new ranking functions—different from simply applying fusion operators such as the minimum, product, linear-product, average, or geometric average—are required in the IM fusion step. 

As mentioned in Section~\ref{s:back}, the usual order of things in the IM framework is to first establish a principled ranking function and then rescale it through validification. However, situations in which the ease of validification guides the choice of ranking function are not new in the IM literature and can be motivated by efficiency and/or computational considerations; see, e.g., \citet{imdc}. In the present context, we need not look far. The ranking functions designed to allow validification using tools tailored to exchangeability can be obtained by suitably modifying the ones considered previously. 
Since the contours are exchangeable, it is natural to consider sequential prefixes $(\pi_1, \ldots, \pi_k), k=1, \ldots, K$,
 when constructing ranking functions.

More specifically, we propose the following combined ranking for $K$ exchangeable IM contours at a candidate value $\theta \in \TT$:
\[\gamma^{\mathrm{ex}}_f(\theta) = \bigwedge_{k=1}^K \gamma_f^{(1:k)}(\theta),\]
where $\gamma_f^{(1:k)}(\theta) = f(\pi_1(\theta), \ldots, \pi_k(\theta))$ and $\bigwedge_{k=1}^K y_k = \min\{y_1, \ldots, y_K\}$. In words, the combined ranking is obtained by applying the fusion operator $f$ sequentially to the first 
$k$ contours, $k=1, \ldots, K$, and then taking the minimum of the resulting values. The key point is that 
\[\Gamma^{\mathrm{ex}}_f(\theta) \leq t \quad \iff  \quad \exists k \leq K : \Gamma_f^{(1:k)}(\theta) \leq t,\]
so exchangeable Markov–compatible events can now be derived. Note, however, that for the minimum, product, and linear-product operators, the sequence of prefix values
$\gamma_f^{(1:1)}(\theta), \gamma_f^{(1:2)}(\theta), \dots, \gamma_f^{(1:K)}(\theta)$ is nonincreasing in $k$. Therefore,
$\bigwedge_{k=1}^K \gamma_f^{(1:k)}(\theta) = \gamma_f(\theta)$ for these operators, so that the resulting exchangeable Markov-compatible events reduce to the corresponding Markov-compatible ones obtained under arbitrary dependence, yielding the same upper bounds. Table~\ref{tab:ubex} reports the exchangeable Markov-compatible events and corresponding upper bounds for the average and geometric-average operators, for which the prefix construction yields genuinely new rankings. 

\begin{table}[tb]
\caption{Upper bounds for $\mathbb{P}_\Theta\{\Gamma^\mathrm{ex}_f(\Theta)\le t\}$, $t \in (0,1)$, based on the minimum bound and on exchangeable Markov's inequality.}
\renewcommand{\arraystretch}{1.3}
\setlength{\tabcolsep}{6pt}
\centering
\begin{tabular}{c c c c}
\hline
$\gamma_f^{\mathrm{ex}}$ & Min-based bound & $\Gamma_f^{\mathrm{ex}}(\Theta)\le t \iff \exists k \leq K :$ & Markov bound \\
\hline




$\gamma_{\text{avg}}^{\mathrm{ex}}$
& $K t$
& $\displaystyle \frac{1}{k}\sum_{j=1}^k 1-\Pi_j(\Theta)
   \ge 1-t$
& $\displaystyle \frac{1}{2(1-t)}$ \\[6pt]

$\gamma_{\text{g.avg}}^{\mathrm{ex}}$
& $K t$
& $\displaystyle \frac{1}{k}\sum_{j=1}^k -\ln \Pi_j(\Theta)
   \ge -\ln t$
& $\displaystyle \frac{-1}{\ln t}$ \\
\hline
\end{tabular}
\label{tab:ubex}
\end{table}


Analogous to the arbitrarily dependent case, the minimum bound can also be leveraged to obtain upper bounds for the average and geometric-average exchangeable rankings. These follow from
\[\Gamma_f^{\mathrm{ex}}(\Theta) \le t \implies \Gamma_\mathrm{min}^{\mathrm{ex}}(\Theta) \le t \iff \Gamma_{\mathrm{min}}(\Theta) \le t, \quad f\in\{\mathrm{avg},\mathrm{g.avg}\},\]
and are reported in Table~\ref{tab:ubex}. Comparing the exchangeable Markov-based bounds with the minimum-based ones leads to the same conclusions as in the arbitrarily dependent case: when $K=2$, the minimum-based bounds are uniformly tighter than the Markov bounds, whereas when $K>2$, neither bound uniformly dominates the other over $t\in(0,1)$. Importantly, the validified contours $\dot \gamma_f^{\mathrm{ex}}$ obtained from these bounds are at least as efficient as their $\dot \gamma_f$ counterparts under arbitrary dependence. 




P-value–based strategies were central in the arbitrarily dependent case, and
recent developments on combining exchangeable p-values motivate a similar
approach here. Recall from Section~\ref{s:dependent} that, under arbitrary dependence,
calibrators yield valid p-merging rules with rejection regions of the form
\eqref{eq:pvRR}. When $\boldsymbol{p}=(p_1,\ldots,p_K)$ consists of exchangeable p-values,
stronger constructions are available. In particular, \citet{RamdasExchangeable} show that, for any
calibrator $g$ and $\alpha \in (0,1)$, the rejection region
\begin{equation}\label{eq:exch_calibrator}
R_\alpha = \left\{
\boldsymbol{p}\in[0,1]^K :
\exists\, k \le K \text{ such that }
\frac{1}{k}\sum_{i=1}^k g\!\left(\frac{p_i}{\alpha}\right)\ge 1
\right\}
\end{equation}
defines an homogeneous p-merging rule under exchangeability. Note that this rejection region is larger than its arbitrarily dependent counterpart in \eqref{eq:pvRR},
and therefore has the potential to yield a more efficient merged p-value.

As in the arbitrarily dependent case, the key to leveraging these results in the IM context is the observation that
$\pi_1(\theta),\ldots,\pi_K(\theta)$ can be viewed as p-value analogues
for testing the assertion $\Theta=\theta$. This suggests seeking validified
exchangeable rankings whose rejection event $\dot\gamma^{\mathrm{ex}}_f(\theta)\le\alpha$, $\alpha \in (0,1)$,
can be expressed in the exchangeable calibrator form
\begin{equation}\label{eq:IM_exch_calibrator}
\exists\, k \le K \text{ such that }
\frac{1}{k}\sum_{i=1}^k g\!\left(\frac{\pi_i(\theta)}{\alpha}\right)\ge 1,
\end{equation}
for some calibrator $g$.

We again restrict attention to linear validifications of the form
$c\gamma_f^{\mathrm{ex}}$, with $c>0$, and seek values of $c$ for which the
rejection event
$c\gamma_f^{\mathrm{ex}}(\theta)\le\alpha$
admits a representation of the form \eqref{eq:IM_exch_calibrator} for some
calibrator $g$. This leads to the validified ranking
\[\dot\gamma_f^{\mathrm{ex}}(\theta)=\min\{1,c\gamma_f^{\mathrm{ex}}(\theta)\}.\]
The next theorem identifies such values of $c$ for $\gamma_{\mathrm{min}}^{\mathrm{ex}}$, $\gamma_{\mathrm{avg}}^{\mathrm{ex}}$, and
$\gamma_{\mathrm{g.avg}}^{\mathrm{ex}}$. Since $\gamma_{\mathrm{prod}}^{\mathrm{ex}}$ and
$\gamma_{\mathrm{l.prod}}^{\mathrm{ex}}$ are not homogeneous, they cannot be
validified within this p-value–based strategy. In fact, no linear validification is possible for these rankings, even under exchangeability. The proof is omitted,
as it follows from the same counterexample based on identical uniforms described in Theorem~\ref{thm:impossible-linear-validification}.

\begin{thm}
For the minimum, average, and geometric-average fusion operators, the choices
$c=K$, $c=2$, and $c=e$, respectively, allow the rejection event
$c\gamma_f^{\mathrm{ex}}(\theta)\le \alpha$, $\alpha \in (0,1)$, to be rewritten in the
exchangeable calibrator form \eqref{eq:IM_exch_calibrator}.
\end{thm}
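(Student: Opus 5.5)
The plan is to reduce the statement to the per-prefix algebra already carried out in the proof of Theorem~\ref{thm:CvalsDep}. By construction $\gamma_f^{\mathrm{ex}}(\theta)=\bigwedge_{k=1}^K\gamma_f^{(1:k)}(\theta)$, and $c>0$, so
\[
c\gamma_f^{\mathrm{ex}}(\theta)\le\alpha \iff \exists\,k\le K:\ c\gamma_f^{(1:k)}(\theta)\le\alpha .
\]
It therefore suffices to show two things. First, for each fixed $k$, the event $c\gamma_f^{(1:k)}(\theta)\le\alpha$ is equivalent to $\frac1k\sum_{i=1}^k g(\pi_i(\theta)/\alpha)\ge 1$. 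Second, the same calibrator $g$ must work for every $k$. The existential quantifier then passes through the equivalence and yields exactly the form \eqref{eq:IM_exch_calibrator}.

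For the average and geometric average, I would repeat the displayed equivalences of Theorem~\ref{thm:CvalsDep} with $K$ replaced by $k$. For the average, $c\gamma_{\mathrm{avg}}^{(1:k)}\le\alpha$ means $\frac1k\sum_{i\le k}\pi_i/\alpha\le 1/c$, which is equivalent to $\frac1k\sum_{i\le k}(c-c\pi_i/\alpha)\ge c-1$. With $c=2$ this gives $g(x)=2-2x$. For the geometric average, taking logarithms gives $\frac1k\sum_{i\le k}-\ln(\pi_i/\alpha)\ge\ln c$, and $c=e$ gives $g(x)=-\ln x$. In both cases $g$ is independent of $k$, which is the needed uniformity.

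The one genuine subtlety is replacing $g$ by its positive part, since calibrators must vanish on $(1,\infty)$. The direction ``$\sum(\cdot)\ge k \Rightarrow \sum(\cdot)_+\ge k$'' is trivial. The converse can fail for a fixed $k$, because dropping negative terms enlarges the region. I would handle this using the freedom in $k$. If $\frac1k\sum_{i\le k}g(\pi_i/\alpha)_+\ge1$ for some $k$, then restricting to the indices $i\le k$ with $\pi_i\le\alpha$ raises the average of the unclipped terms. The catch is that these indices need not form a prefix, so this step may fail. To avoid it, I would instead check the authors' intended reading, namely that the rejection region built from the calibrator $(g)_+$ coincides with $\{c\gamma_f^{\mathrm{ex}}\le\alpha\}$ on the relevant set, or simply contains it. Containment is enough for validity, since $\dot\gamma^{\mathrm{ex}}_f\le\alpha$ then implies $\boldsymbol{\Pi}(\Theta)\in R_\alpha$, and $\prob\{R_\alpha\}\le\alpha$ by the result of \citet{RamdasExchangeable}. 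I expect this positive-part issue to be the main obstacle. I would present the equivalence with the unclipped $g$, note that the clipped region contains it, and argue validity from that containment.

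For the minimum, $\gamma_{\mathrm{min}}^{\mathrm{ex}}=\gamma_{\mathrm{min}}$, because the prefix minima are nonincreasing. Taking $g(x)=K\mathbbm{1}\{x\le 1/K\}$ with $c=K$, the event $\exists k:\frac1k\sum_{i\le k}K\mathbbm{1}\{\pi_i\le\alpha/K\}\ge1$ is implied by $\min_i\pi_i\le\alpha/K$, taking $k$ equal to the first index attaining the minimum. It is also implied by the $k=K$ case of Theorem~\ref{thm:CvalsDep}. Conversely, any such $k$ forces some $\pi_i\le\alpha/K$. Hence the two events coincide, which completes the argument for all three operators.
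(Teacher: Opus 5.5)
Your proposal is correct and follows the paper's proof. You reduce $c\gamma_f^{\mathrm{ex}}(\theta)\le\alpha$ to the existence of a prefix $k$ with $c\gamma_f^{(1:k)}(\theta)\le\alpha$, redo the algebra of Theorem~\ref{thm:CvalsDep} with $K$ replaced by $k$, and use the same $k$-independent calibrators. Your handling of the positive part is more careful than the paper's, which passes silently from $2-2x$ and $-\ln x$ to $(2-2x)_+$ and $(-\ln x)_+$; exact equality does fail, as you suspected: with $K=2$, $\pi_1(\theta)=10\alpha$, $\pi_2(\theta)=0$ and $\alpha\le 0.1$, the point lies in the clipped region at $k=2$, yet $2\gamma_{\mathrm{avg}}^{\mathrm{ex}}(\theta)=10\alpha>\alpha$, so the containment you settle on is the right reading and is all that validity requires.
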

\begin{proof}
Applying the same algebraic manipulations as in the arbitrarily dependent case
to each prefix $(\pi_1,\ldots,\pi_k)$ yields the following equivalences:
\begin{align*}
c \gamma_{\mathrm{min}}^{(1:k)}(\theta) \le \alpha
&\iff
\frac{1}{k}\sum_{i=1}^k K\mathbbm{1}\!\left\{
\frac{\pi_i(\theta)}{\alpha}\le \frac{1}{c}
\right\} \ge 1, \\
c \gamma_{\mathrm{avg}}^{(1:k)}(\theta) \le \alpha
&\iff
\frac{1}{k}\sum_{i=1}^k \left(c - c\frac{\pi_i(\theta)}{\alpha}\right)
\ge c-1, \\
c \gamma_{\mathrm{g.avg}}^{(1:k)}(\theta) \le \alpha
&\iff
\frac{1}{k}\sum_{i=1}^k -\ln\!\left(\frac{\pi_i(\theta)}{\alpha}\right)
\ge \ln c.
\end{align*}
Setting $c=K$, $c=2$, and $c=e$, respectively,
yield the desired representations in \eqref{eq:IM_exch_calibrator}, with the
same calibrators stated in the proof of Theorem~\ref{thm:CvalsDep}.
\end{proof}

Since $\gamma_{\mathrm{min}}^\mathrm{ex} = \gamma_{\mathrm{min}}$, the exchangeable p-value--based validification of the minimum operator coincides with its arbitrarily dependent counterpart. On the other hand, the resulting validified average- and geometric-average-based rankings are at least as efficient as those obtained from the p-value--based strategy under arbitrary dependence.

We end this section with a summary of the main conclusions and efficiency-based recommendations under exchangeability, which largely mirror those obtained under arbitrary dependence. In particular, the linear-product–based ranking remains unrecommended here, and the recommendations summarized in Table~\ref{tab:RecDep} continue to apply upon replacing $\gamma_f$ with $\gamma_f^{\mathrm{ex}}$. Potential efficiency gains relative to the arbitrarily dependent case arise only for $\dot\gamma_{\mathrm{avg}}^{\mathrm{ex}}$ and $\dot\gamma_{\mathrm{g.avg}}^{\mathrm{ex}}$, since $\dot\gamma_{\mathrm{min}}^{\mathrm{ex}}$, and $\dot\gamma_{\mathrm{prod}}^{\mathrm{ex}}$ coincide with their arbitrarily dependent counterparts. As an illustration, Figure~\ref{fig:compK3} displays the average- and geometric-average-based fused IM contours obtained under independence, arbitrary dependence, and exchangeability for the $K=3$ contours shown in Figure~\ref{fig:IM_contours}. The comparison highlights how additional structural assumptions can translate into gains in efficiency.


\begin{figure}[t]
\begin{center}
\subfigure[Average]{\scalebox{0.5}{\includegraphics{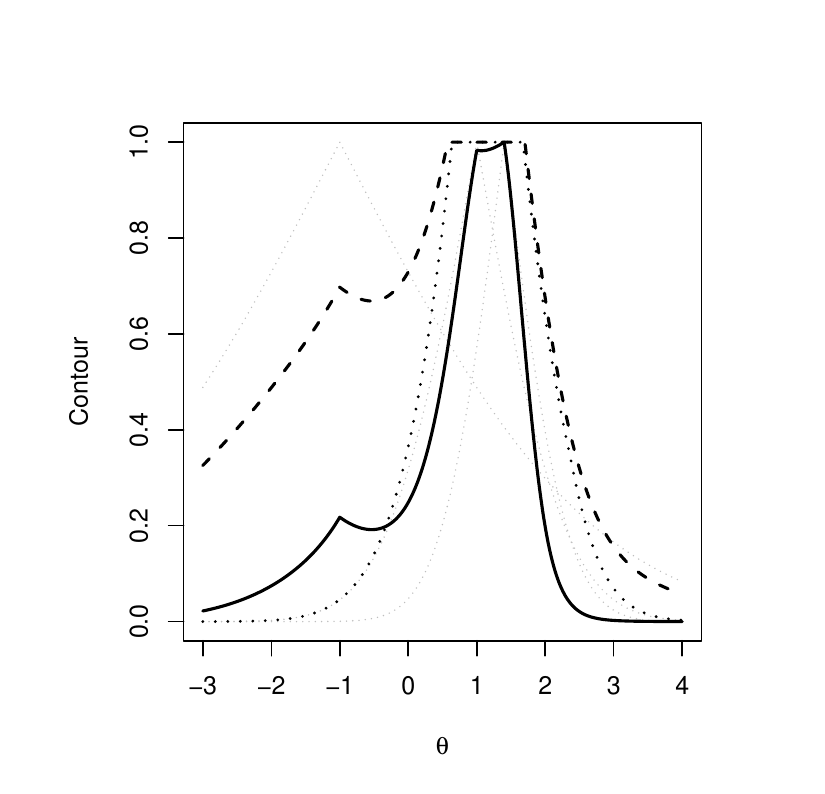}}}
\subfigure[Geometric Average]{\scalebox{0.5}{\includegraphics{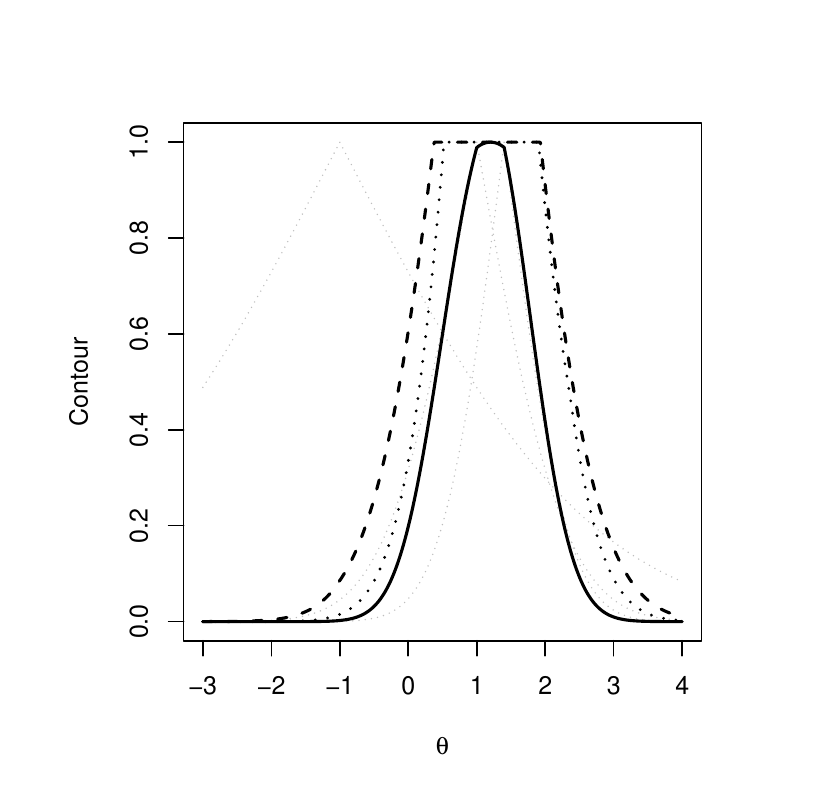}}}
\end{center}
\caption{Fused IM contours obtained via the FVN strategy applied to the three IM contours (grey) when assuming they are independent (solid), dependent (dashed), and exchangeable (dotted).}
\label{fig:compK3}
\end{figure}

\section{Examples}
\label{s:Examples}

\subsection{Meta-analysis}


We consider a simple hypothetical meta-analysis scenario in which $\Theta$ is the unknown rate of an exponential distribution and only the MLEs and sample sizes from $K$ independent studies are available. Owing to the regularity of the exponential model, the relative likelihood depends on the data only through this information, so the individual IM contours can be computed. Figure~\ref{fig:meta}(a) shows the case $K=3$, based on MLEs and sample sizes given by $\hat \theta_1=1.66$, $\hat \theta_2=0.91$, $\hat \theta_3=0.78$, and $n_1=3$, $n_2=6$, $n_3=9$.

Figure~\ref{fig:meta}(a) also shows the oracle fused IM that would be obtained if the full data from the three studies were available. In the exponential case, the full-data MLE and relative likelihood can be expressed as functions of the individual MLEs alone, so this oracle can be reconstructed from the available information. This, however, is rarely the case. More generally, the oracle IM based on the full data cannot be recovered from summary statistics alone. Fortunately, the individual contours are all that is needed to fuse independent IMs through the proposed FVN strategy. The fused IM based on the average fusion operator is shown in Figure~\ref{fig:meta}(b).

For comparison, Figure~\ref{fig:meta}(b) also shows a ``large-sample'' IM solution based on the contour
\[\pi(\theta) = 1 - F\bigl\{ J(\check\theta - \theta)^2 \bigr\}, \quad \theta \in \TT,\]
where $F$ is the $\chisq(1)$ distribution function,
$\check{\theta} = J^{-1}\sum_{k=1}^K n_{k} \hat\theta_k^{-1}$ ,
and $J = \sum_{k=1}^K n_{k} \hat\theta_k^{-2}$.
In words, this IM arises from treating each MLE as Gaussian with mean $\Theta$, which makes the weighted average $\check{\theta}$—with weights equal to the inverses of the observed variances, $n_k / \hat\theta_k^2$—the best linear unbiased estimator of $\Theta$ with variance $J^{-1}$. This yields an exact valid  IM when the MLEs are indeed Gaussian. When this is not the case, approximate validity holds when the sample size is sufficiently large; see \citet{imdc} for details. Note that this approach does not involve fusing individual IMs into a single IM, but it is nevertheless appealing in meta-analysis settings, since the IM can be constructed solely from the available summary statistics. From Figure~\ref{fig:meta}(b), the large-sample contour appears substantially more efficient than the average-based FVN one. However, this gain in efficiency should be interpreted with caution, since the former is not guaranteed to be valid, whereas the latter is. This is confirmed in a simulation study where the above scenario is repeated $10{,}000$ times, with datasets being generated from $\Theta = 0.8$. Figure~\ref{fig:meta}(c) shows the (empirical) distribution function of both contours at $\Theta=0.8$ in each Monte Carlo replicate. The large sample IM is stochastically smaller than \unif(0,1), indicating that it is not valid in the present setting. 

A legitimate criticism of the average-based FVN solution above is whether it achieves the best possible efficiency given all the available information. Because the FVN strategy relies solely on the individual contours, potentially informative quantities may be overlooked, depending on the chosen fusion operator. For example, when using the average fusion operator, each contour is assigned equal weight in the fusion process. This can be undesirable in settings such as meta-analysis, where additional information—such as sample sizes or observed variances—is available, suggesting that different contours should be weighted differently. This limitation is not unique to the average; the minimum, product, and geometric average operators share the same feature. These considerations highlight the need to look beyond the standard fusion operators commonly used in the possibility literature when fusing IM contours. 
Here, a weighted average fusion operator with weights proportional to the inverse variances is a natural choice. Its distribution for validification has no closed form because of the data-dependent weights, 
so we approximate it through Monte Carlo.
Figure~\ref{fig:meta}(b) shows the resulting contour, which is strictly more efficient than the average-based one. Its validity is confirmed in Figure~\ref{fig:meta}(c) using the same simulation study described above. To verify that the observed efficiency gain is not specific to a single dataset, Table~\ref{tab:dp} reports the average widths of the confidence intervals obtained from the fused IMs for various $\alpha$ levels across the Monte Carlo replicates. The efficiency gain from the weighted-average fusion is clear. All associated standard errors were below $0.01$.


\begin{figure}[t]
\begin{center}
\subfigure[]{\scalebox{0.37}{\includegraphics{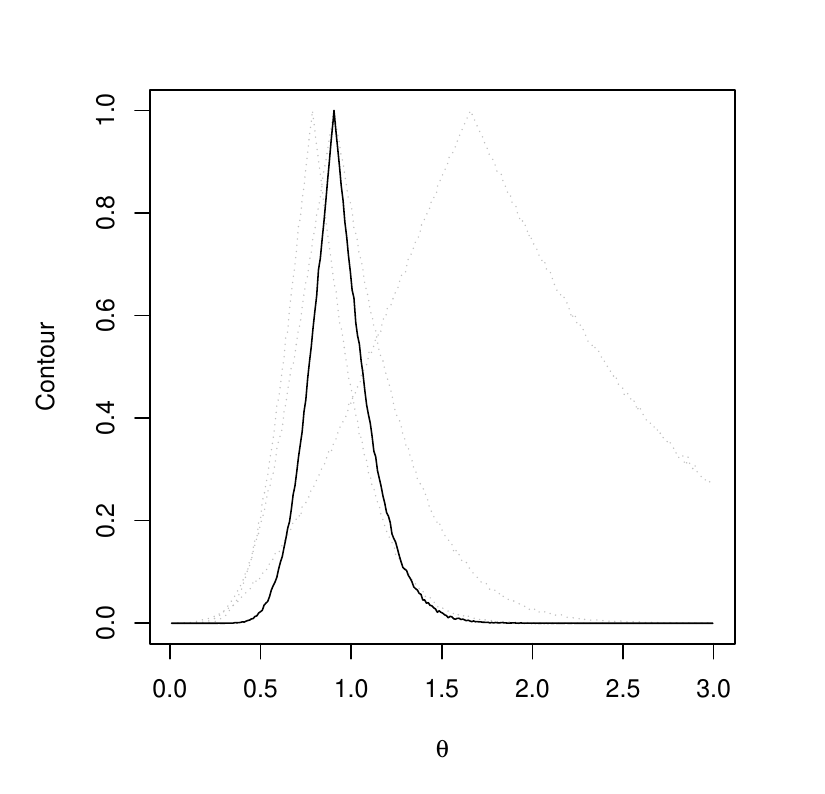}}}
\subfigure[]{\scalebox{0.37}{\includegraphics{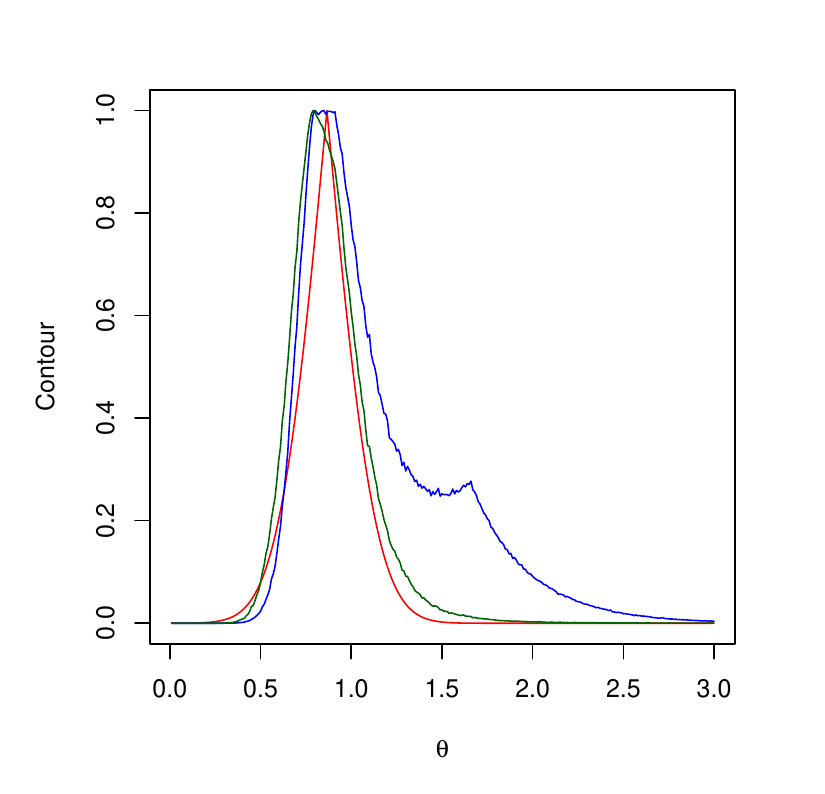}}}
\subfigure[]{\scalebox{0.37}{\includegraphics{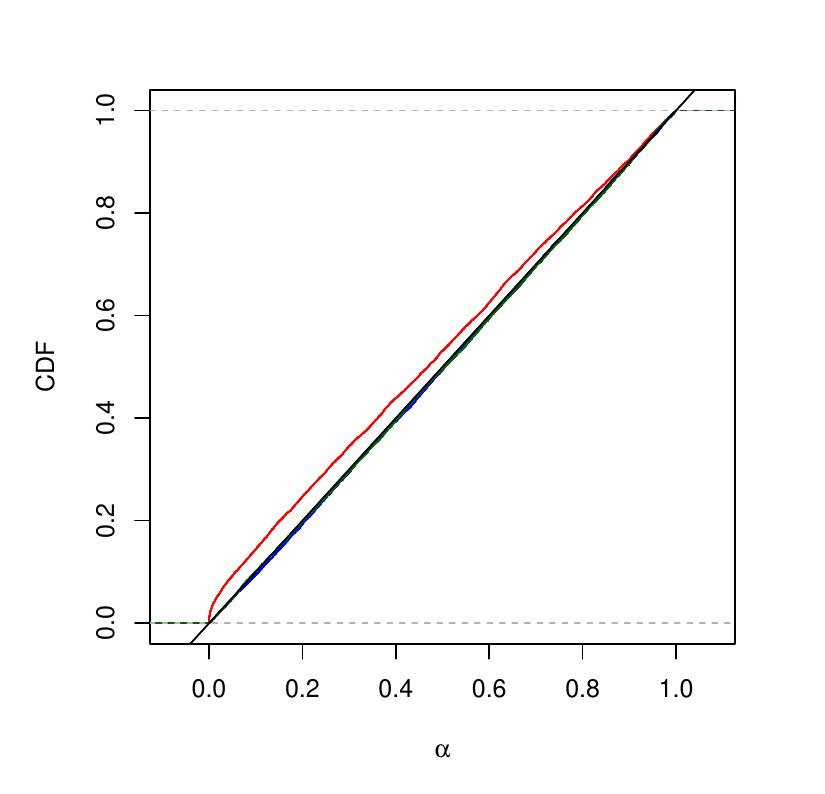}}}
\end{center}
\caption{
Panel~(a) shows three IM contours (grey) and the oracle IM contour (black). Contours and empirical distribution functions related to the large sample IM (red), average-based fused IM (blue) and weighted average fused IM (green) are shown in Panels (b) and (c), respectively.
}
\label{fig:meta}
\end{figure}

\begin{table}[tb]
\caption{Average interval widths.} \label{tab:dp}
\centering
\begin{tabular}{lrrrrr}
\hline
\multirow{2}{*}{Fusion operator} & \multicolumn{5}{c}{$100(1-\alpha)\%$} \\
 & $10$ & $30$ & $50$ & $70$ & $90$ \\
\hline
average & 0.17 & 0.38 & 0.59 & 0.86  & 1.26 \\
weighted-average  & 0.11 & 0.25  & 0.39 & 0.55 & 0.83 \\
\hline
\end{tabular}
\end{table}

\subsection{Distribution-free inference on the conditional median}

Consider the problem of quantifying uncertainty about the median of a response variable at a new covariate value, after observing $n$ iid covariate–response pairs, without imposing any parametric assumptions on the data-generating mechanism. Inspired by the developments in \citet{CandesConditionalMedian}, \citet{cella2025} propose an IM construction that relies on a sample-splitting strategy. One portion of the data is used to build a preliminary estimate of the conditional median, which is then used to carry out the IM construction on the remaining data. This leads to a distribution-free IM for the conditional median that enjoys marginal calibration guarantees.

Consider the dataset with $n = 20$ shown in Figure~\ref{fig:CondMed}(a), where $Y$ is the response variable and $X$ is the covariate. Denote by $m(x)$ the median of the conditional distribution of $Y$ given $X = x$. To apply the distribution-free IM construction for $\Theta = m(x_{n+1})$ described above, we first randomly split the data into two halves, with the first half represented by the black data points in Figure~\ref{fig:CondMed}(a). The curve in the same panel corresponds to the second-degree median regression model used to estimate $m(x)$ based on these data. Suppose $X_{n+1} = 3$. Figure~\ref{fig:CondMed}(b) displays the resulting IM contour constructed from the remaining half of the data, based on this estimate.

The contour shown in Figure~\ref{fig:CondMed}(b) corresponds to a single realization of the random sample-splitting procedure. Repeating this procedure produces multiple IM contours for $\Theta$. Figure~\ref{fig:CondMed}(c) shows nine additional contours, revealing substantial variability across splits and highlighting the potential benefit of aggregation. Since all contours arise from the same randomized mechanism, they are exchangeable, which justifies the use of the exchangeability-based fusion strategies developed in Section~\ref{s:exch}. The average-based fused IM is shown in Figure~\ref{fig:CondMed}(c). For comparison, the average-based fused IM under arbitrary dependence is also displayed, illustrating the efficiency gains achieved by exploiting exchangeability. 

To illustrate a scenario in which relying on standard fusion operators from the possibility literature may be detrimental from an efficiency standpoint, we repeat the above analysis with a dataset of size $n = 500$. As shown in Figure~\ref{fig:CondMed}(d), the contours resulting from repeated splitting are now much more homogeneous, and fusion based on the average operator leads to a noticeable loss in efficiency. This behavior is not specific to the average; the same phenomenon occurs for the minimum, product, and geometric average operators, although we omit the corresponding contours from Figure~\ref{fig:CondMed}(d). Interestingly, in this setting, the maximum fusion operator,
$\gamma_{\mathrm{max}}(\theta) = \max\{\pi_1(\theta), \ldots, \pi_K(\theta)\}$,
provides a more efficient alternative. Figure~\ref{fig:CondMed}(d) shows that the fused contour under arbitrary dependence—coinciding with $\gamma_{\mathrm{max}}(\theta)$, since it is already valid and normalized—achieves the desired stabilization while achieving higher efficiency. Although the maximum operator is classified as disjunctive in the possibility literature and is typically recommended in situations where at least one contour may be unreliable (which is not the case here), its effectiveness for IM fusion in this setting highlights that intuition based on the possibility-fusion literature does not always carry over when validity and/or efficiency are the primary considerations. Figure~\ref{fig:CondMed}(d) also displays the fused IM under the exchangeability assumption, based on
\[
\gamma^{\mathrm{ex}}_{\mathrm{max}}(\theta) = \bigwedge_{k=1}^K \gamma_{\mathrm{max}}^{(1:k)}(\theta).
\]
Since these prefix maxima are nondecreasing in $k$, it follows that $\gamma^{\mathrm{ex}}_{\mathrm{max}}(\theta) = \pi_1(\theta)$, which is again already valid and normalized, and evidently more efficient than $\gamma_{\mathrm{max}}$. This supports the intuition that, when repeated splitting and aggregation are detrimental from an efficiency perspective, a single split may be preferable.

\begin{figure}[tbh]
\begin{center}
\subfigure[]{\scalebox{0.43}{\includegraphics{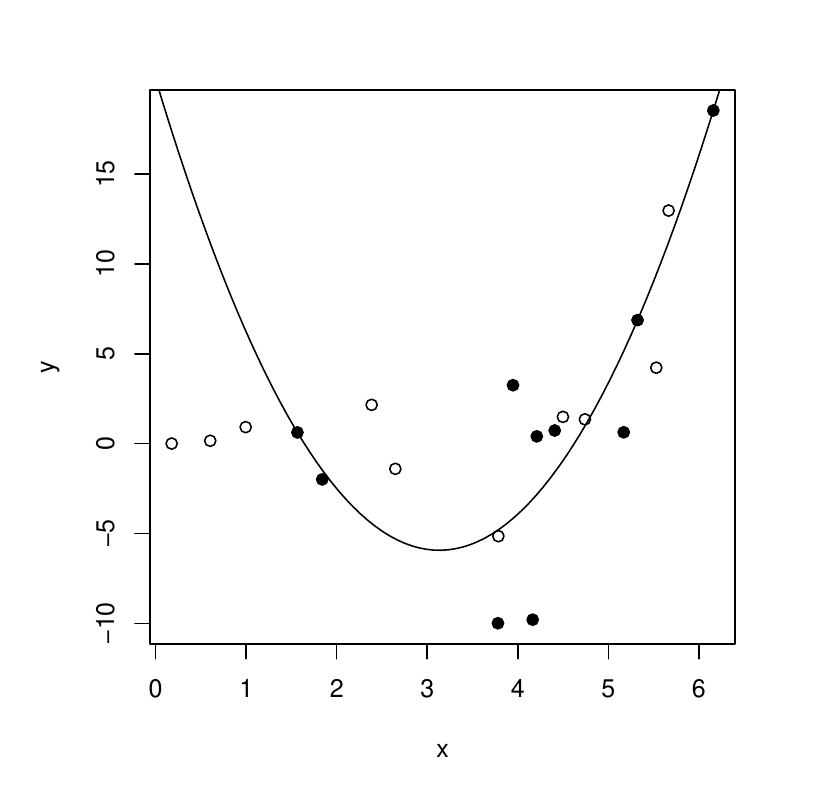}}}
\subfigure[]{\scalebox{0.43}{\includegraphics{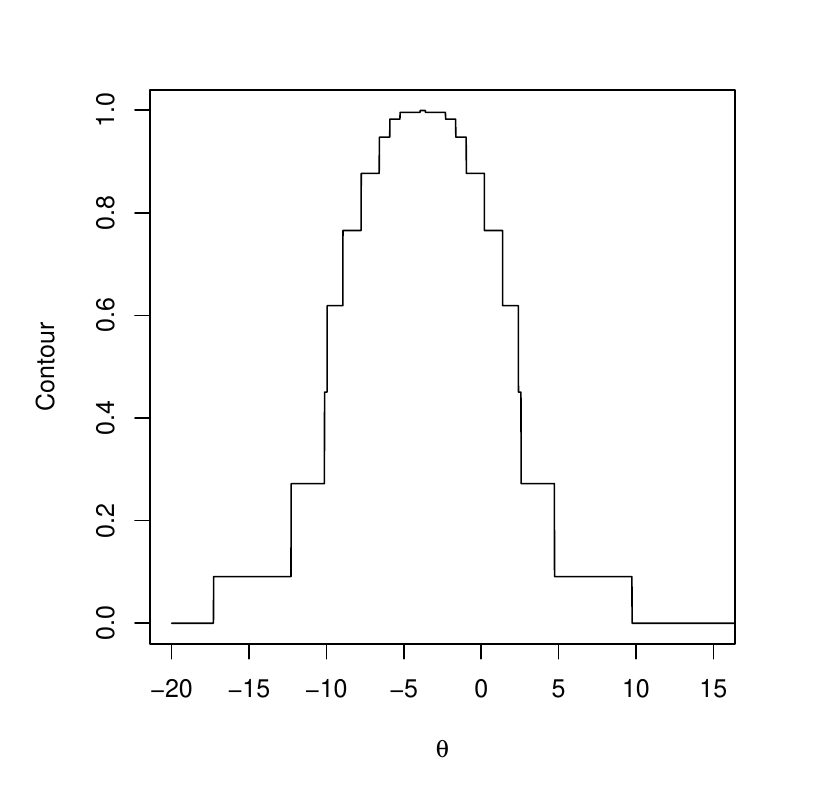}}}
\subfigure[]{\scalebox{0.43}{\includegraphics{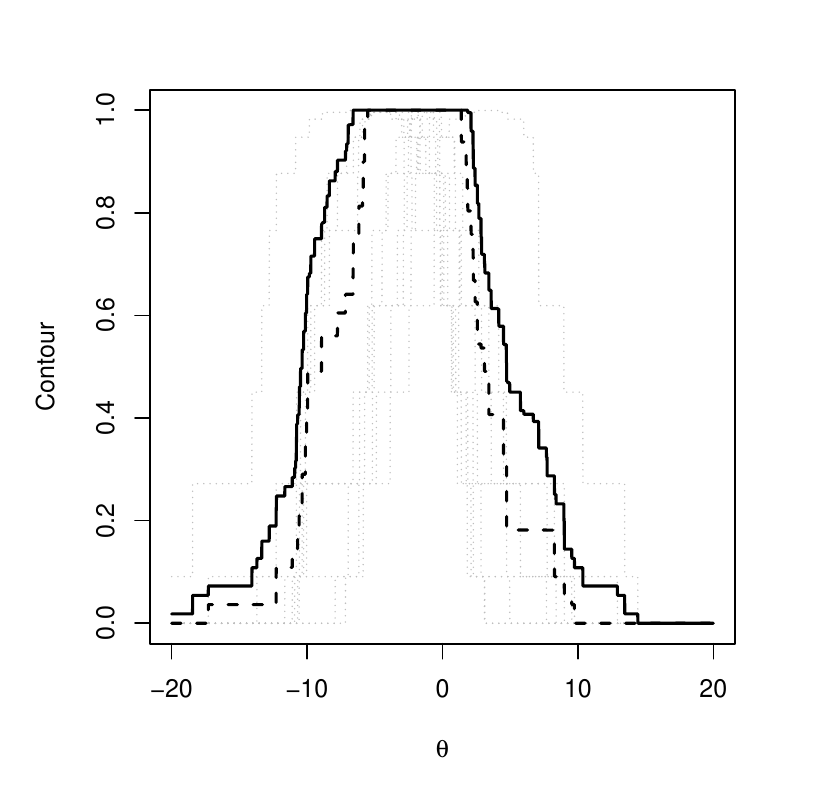}}}
\subfigure[]{\scalebox{0.43}{\includegraphics{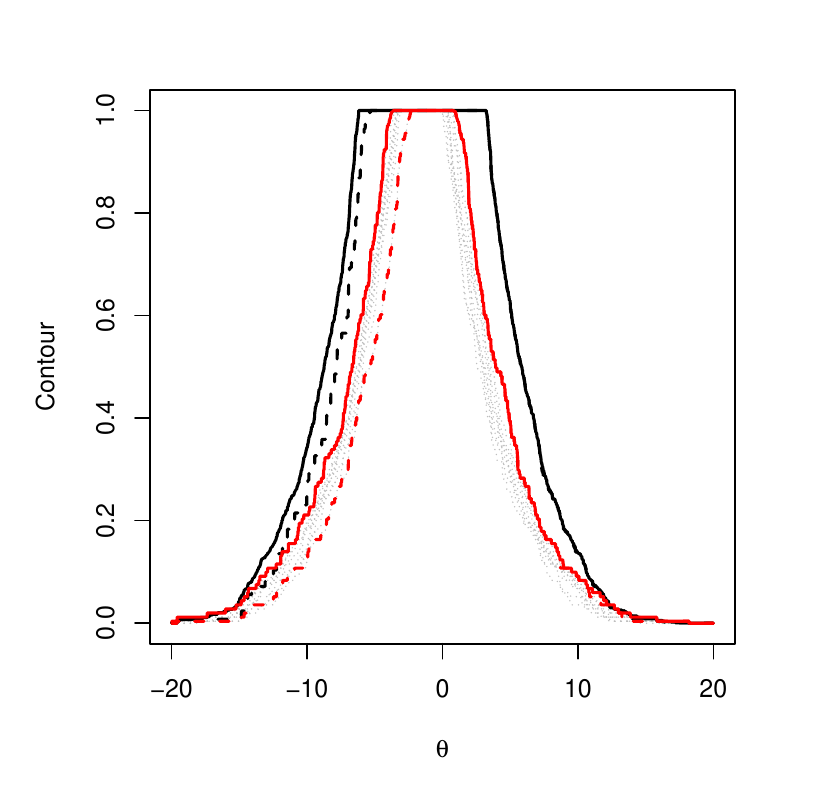}}}
\end{center}
\caption{Panel~(a) shows an observed dataset, a random split, and the corresponding regression fit based on the black points. Panel~(b) displays the resulting IM contour. Panel~(c) presents 10 contours (grey) obtained from different random splits of the dataset in Panel~(a), together with average-based fused IMs under arbitrary dependence (solid black) and exchangeability (dashed black). Panel~(d) presents the analogous experiment for a dataset of size $n=500$, additionally including maximum-based fused IMs (red).
}
\label{fig:CondMed}
\end{figure}

\section{Conclusion}
\label{s:Conclusion}

This paper introduces a general framework for validity-preserving fusion of possibility contours, a fundamental development for the continued growth of the IMs approach to statistical inference. The underlying logic of the proposed framework mirrors the core rationale of the IM framework itself, making it akin to an application of IM methodology to settings where the ``data'' consist of IM possibility contours. The framework is intentionally general and was studied under independence, arbitrary dependence, and exchangeability of the available IMs, thereby equipping IM users with principled fusion strategies across a broad range of both classical and modern inferential settings in which combining IMs may be required. Beyond the theoretical developments, the paper also provides concrete efficiency-based recommendations for the validification of commonly used fusion operators under the different dependence structures considered.

Once validity and efficiency become part of the fusion problem, selecting a fusion operator solely based on interpretation---as is typically done in the possibility literature---is no longer the end of the story, and several forms of unexpected behavior can arise. Interpretation-wise different fusion operators can lead to the same fused outcome, which was the case for the product and geometric-average fusion operators under independence. Popular fusion operators from the possibility literature may become entirely unrecommended, as happened with the linear-product rule. It can also be advantageous to move beyond the traditional fusion operators from the possibility literature in pursuit of greater efficiency, especially when additional structural information is available, as illustrated by the weighted-average in a meta-analysis setting. More surprisingly, efficiency considerations may i) favor fusion strategies that run counter to the usual intuition from the possibility literature, such as the use of the maximum, a disjunctive-type fusion operator, in settings where conjunctive or statistical fusion would traditionally appear more natural; and ii) require more elaborate fusion mechanisms that go beyond the mere application of a fusion operator, as illustrated by the exchangeable setting.

We conclude with two directions for future research. First, the normalization in \eqref{eq:normGeneral}, while intuitive and simple in principle, can become computationally challenging in high-dimensional settings. Developing efficient strategies for computing this normalization therefore remains an important open problem. Second, ideas from the p-value merging literature proved highly useful for the problem of fusing IMs. Recent e-value--based developments on anytime-valid and post-hoc p-values (see \citet{RamdasEVbook} and the references therein) suggest a clear direction for future work on IM fusion with additional guarantees. In this regard, the recent ``e-possibilistic IM'' framework of \citet{martin2025regularized} provides a promising starting point.


\section*{Acknowledgments}
The author thanks Professors Ryan Martin and Aaditya Ramdas for helpful discussions and valuable suggestions.

\bibliographystyle{apalike}
\bibliography{Bibli.bib}

\end{document}